\newtheorem{example}{Example}
\newtheorem{theorem}{Theorem}
\newcommand\ldiaarg[1]{\langle#1\rangle}
\newcommand{\LL}{\mathcal{L}}
\newcommand{\M}{\mathcal{M}}
\newcommand{\cT}{\mathcal{T}}
\newcommand{\Ag}{Agt}
\newcommand{\BP}{\mathcal{P}}
\newcommand{\prefixes}{\mathit{Pre}}
\renewcommand{\phi}{\varphi}
\newcommand{\Exp}{\it Exp} %math mode is always \it
\newcommand{\imp}{\rightarrow}
\newcommand{\POL}{\mathsf{POL}}
\newcommand{\EL}{\mathsf{EL}}
\newcommand{\DEL}{\mathsf{DEL}}
\newcommand{\POLS}{\mathsf{POL}^-}
\newcommand{\tP}{\mathsf{P}}
\newcommand{\EPDL}{\mbox{\rm EPDL}}
\newcommand{\PAL}{\mathsf{PAL}}
\newcommand{\union}{\cup}
\newcommand{\obsright}{\blacktriangleright}
\newcommand{\obsleft}{\blacktriangleleft}
\newcommand{\obsup}{\blacktriangle}
\newcommand{\obsdown}{\blacktriangledown}
\newcommand{\expwatere}{(\obsright + \obsup)^{\leq 3} (\obsdown + \obsleft + \ep) (\obsright + \obsup)^{\leq 3}}
\newcommand{\exppower}{(\obsleft + \obsdown)^{\leq 3}}
\newcommand{\expwater}{(\obsright + \obsup)^{\leq 3}}
\newcommand{\Act}{\ensuremath{\mathbf{\Sigma}}}
\newcommand{\vbot}{\mathit{vbot}}
\newcommand{\tr}{\mathsf{tr}}
\newcommand{\starfree}{\mathsf{Star\mbox{-}Free}}
\newcommand{\word}{\mathsf{Word}}
\newcommand{\PSPACE}{\mathsf{PSPACE}}
\newcommand{\NEXPTIME}{\mathsf{NEXPTIME}}
\newcommand{\NP}{\mathsf{NP}}
\newcommand{\modelM}{\mathcal M}
\newcommand{\set}[1]{\{#1\}}
\newcommand{\suchthat}{\mid}
\newcommand{\ep}{\ensuremath{\varepsilon}}
\newcommand{\regdiv}[1]{\ensuremath{\backslash} #1}
\newtheorem{proposition}[theorem]{Proposition}
\newtheorem{corollary}[theorem]{Corollary}
\newtheorem{claim}[theorem]{Claim}
\newtheorem{defi}[theorem]{Definition}
\newcommand{\tile}[6]{
	%left
	\draw[fill=#3] (#1,#2) -- (#1,#2+1) -- (#1+0.5, #2+0.5) -- (#1, #2);
	%
	%up
	\draw[fill=#4] (#1,#2+1)-- (#1+1,#2+1) -- (#1 +0.5, #2+0.5) -- (#1, #2+1);
	%
	%right
	\draw[fill=#5] (#1+1,#2)-- (#1+1,#2+1) -- (#1 +0.5, #2+0.5) -- (#1 +1, #2);
	%
	%bottom
	\draw[fill=#6] (#1,#2)-- (#1 +1,#2) -- (#1 +0.5, #2+0.5) -- (#1, #2);
	\draw[fill=black] (#1+0.5, #2+0.5) -- (#1+0.6, #2+0.4) -- (#1+0.4, #2+0.4) -- cycle;
}
\newcommand{\tilewcsmall}[4]{\tikz[scale=0.5]{\tile{0}{0}{#1}{#2}{#3}{#4}}}
\newcommand{\tilered}{red!80}
\newcommand{\tilegreen}{green!70}
\newcommand{\tilewhite}{white}
\newcommand{\tileseed}{t_0}
\newcommand{\tileset}{T}
\title{On simple expectations and observations of intelligent agents: A complexity study}
\author{
Sourav Chakraborty \\
Indian Statistical Institute, Kolkata, India\\
sourav@isical.ac.in\\
\And
Avijeet Ghosh\\
Indian Statistical Institute, Kolkata, India\\
avijeet\_r@isical.ac.in\\
\And
Sujata Ghosh \\
Indian Statistical Institute, Chennai, India\\
sujata@isichennai.res.in\\
\And
Fran{\c{c}}ois Schwarzentruber\\
Unive Rennes, IRISA, France\\
francois.schwarzentruber@ens-rennes.fr\\
}
\newenvironment{breakablealgorithm}
  {% \begin{breakablealgorithm}
   \begin{center}
     \refstepcounter{algorithm}% New algorithm
     \hrule height.8pt depth0pt \kern2pt% \@fs@pre for \@fs@ruled
     \renewcommand{\caption}[2][\relax]{% Make a new \caption
       {\raggedright\textbf{\fname@algorithm~\thealgorithm} ##2\par}%
       \ifx\relax##1\relax % #1 is \relax
         \addcontentsline{loa}{algorithm}{\protect\numberline{\thealgorithm}##2}%
       \else % #1 is not \relax
         \addcontentsline{loa}{algorithm}{\protect\numberline{\thealgorithm}##1}%
       \fi
       \kern2pt\hrule\kern2pt
     }
  }{% \end{breakablealgorithm}
     \kern2pt\hrule\relax% \@fs@post for \@fs@ruled
   \end{center}
  }
\algnewcommand\algorithmicforeach{\textbf{for each}}
\begin{document}

\maketitle

\begin{abstract}
  Public observation logic (POL) reasons about agent expectations and agent observations in various real world situations. The expectations of agents take shape based on certain protocols about the world around and they remove those possible scenarios where their expectations and observations do not match. This in turn influences the epistemic reasoning of these agents. In this work, we study the computational complexity of the satisfaction problems of various fragments of POL. In the process, we also highlight the inevitable link that these fragments have with the well-studied Public announcement logic.
\end{abstract}

\section{Introduction}

Reasoning about knowledge among multiple agents plays an important role in studying real-world problems in a distributed setting, e.g., in communicating processes, protocols, strategies and games. \textit{Multi-agent epistemic logic ($\EL$)} \cite{RAK} and its dynamic extensions, popularly known as \textit{dynamic epistemic logics ($\DEL$)} \cite{DEL} are well-known logical systems to specify and reason about such dynamic interactions of knowledge. Traditionally, agents' knowledge is about facts and $\EL$/$\DEL$ mostly deals with this phenomenon of `knowing that'. More recently, the notions of `knowing whether', `knowing why' and `knowing how' have also been investigated from a formal viewpoint \cite{wang2018}. 

These agents also have expectations about the world around them, and they reason based on what they observe around them, and such observations may or may not match the expectations they have about their surroundings. Following \cite{DBLP:conf/icla/Wang11}, such perspectives on agent reasoning were taken up by \cite{van2014hidden} and studied formally in the form of \textit{Public observation logic ($\POL$)}. We present below a situation that $\POL$ is adept at modelling. The example is in the lines of the one considered in \cite{DBLP:conf/ijcai/0001GGS22}:

\begin{example}\label{ex-intro}
Let us consider a robotic vacuum cleaner ($\vbot$) that is moving on a floor represented as a $7\times 7$ grid (see Figure~\ref{figure:field}). On the top right of the floor, there is a debris-disposal area, and on the bottom left, there is a power source to recharge. Two children Alice and Bob are awed by this new robotic cleaner. They are watching it move and trying to guess which direction it is moving. The system is adaptive, thus the global behaviour is not hard-coded but learned. We suppose that $\vbot$ moves on a grid and the children may observe one of the four directions: right ($\obsright$), left ($\obsleft$), up($\obsup$) or down($\obsdown$), and of course, combinations of them. Note that, for example, observing $\obsleft$ means that the bot moves one step left. Let Alice be aware of a glitch in the bot. Then her expectations regarding the $\vbot$'s movements include the following possibilities:
 \begin{enumerate}
     \item The bot may go up or right for debris-disposal, but may make an erroneous move, that is, a down or a left move.
     \item The bot may go towards power source without error.
 \end{enumerate}
The only difference between Bob's expectation and that of Alice is that Bob does not consider the bot to make an error while moving towards debris-disposal since he is unaware of the glitch.

Suppose the $\vbot$ is indeed moving towards power from the center of the grid. Hence if the bot makes one left move, $\obsleft$, Bob would know that the bot is moving towards power whereas Alice would still consider moving towards debris-disposal a possibility.

\begin{figure}
	\begin{center}
		\newcommand{\sizefield}{7}
		\begin{tikzpicture}[scale=0.3]
			\foreach \x in {0, 1, ..., \sizefield} {
				\draw (\x, 0) -- (\x, \sizefield);
				\draw (0, \x) -- (\sizefield, \x);
			}
			\node at (0.5, 0.5) {\includegraphics[width=0.3cm]{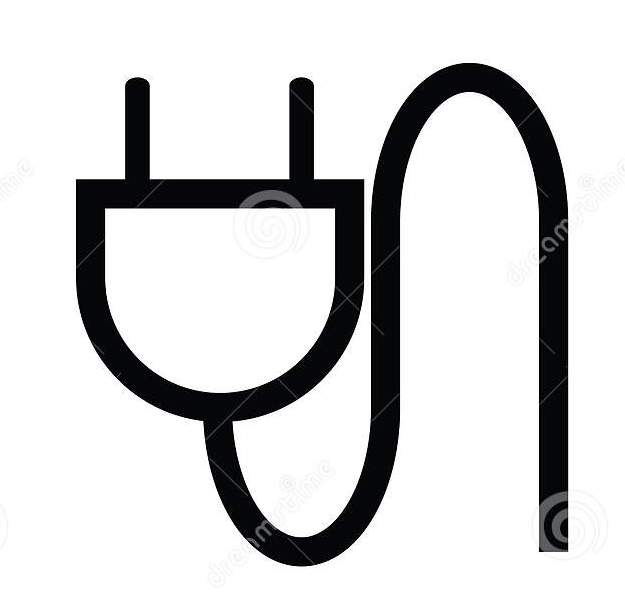}};%power
			\node at (6.5, 6.5) {\includegraphics[width=0.3cm]{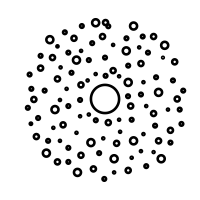}};%debris
			\node at (3.5, 3.5) {\includegraphics[width=0.3cm]{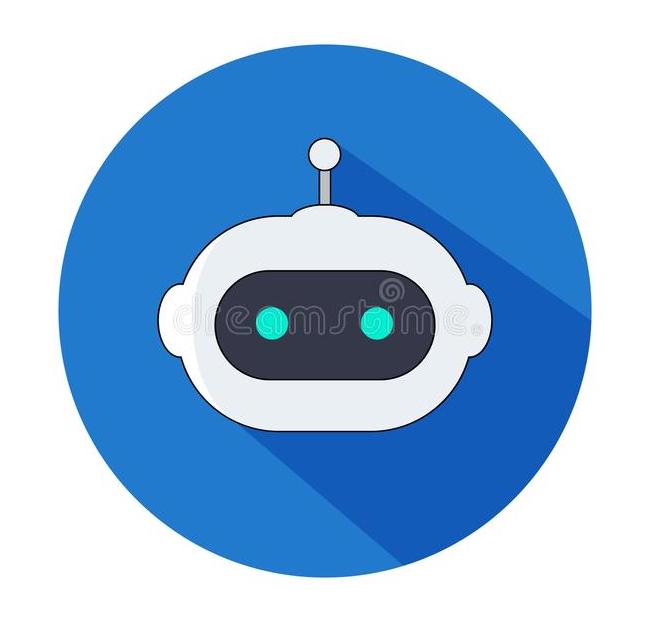}};%drone
		\end{tikzpicture}
	\end{center}
	\vspace{-3mm}
\caption{A robotic vacuum cleaner on the floor (in the middle of the grid). The power source is at bottom left, whereas the debris-disposal area is at top right.\label{figure:field}}
\end{figure}

\end{example}

 %Coming back to the two concrete examples mentioned above,
 The example concerns certain rules that we follow in our daily life, they deal with situations where agents expect certain observations at certain states based on some pre-defined \textit{protocols}, viz. the bot mechanism in the example given above. They get to know about the actual situation by observing certain actions which agree with their expectations corresponding to that situation. $\POL$ does not deal with the protocols themselves, but the effect those protocols have in our understanding of the world around us in terms of our expectations and observations. In \cite{DBLP:conf/ijcai/0001GGS22} we have investigated the computational complexity of the model-checking problem of different fragments of $\POL$, and in this paper, we will deal with the computational complexity of the satisfaction problem of various proper fragments of $\POL$ (cf. Figure \ref{figure:results}). We will show how certain simple fragments of $\POL$ give rise to high complexity with respect to their computational behaviour. 
 
%The techniques used in the proofs are varied and non-trivial.
To prove the complexity results of some fragment(s) of $\POL$ we use a translation to Public announcement logic ($\PAL$) \cite{DBLP:journals/synthese/Plaza07}, whereas, for other fragment(s), a tableau method is utilized where the tableau rules provide a mix of modal logic reasoning and computations of language theory residuals.

\emph{Outline. }
In Section \ref{section:background}, we recall the relevant definitions of $\POL$. In Section \ref{section:application}, we describe an application of the satisfiability problem of $\POLS$. In Section \ref{section:starfreetableau} we present a $\NEXPTIME$ algorithm for $\POLS$ using the tableau method. In Section \ref{section:nexptimehard}, we prove that $\POLS$ is in $\NEXPTIME$-Hard. In section \ref{section:fragmentcomplexity}, we present the complexity results for various fragments of $\POLS$. 
% We prove that the single-agent word fragment of $\POLS$ is $\NP$-complete while the multi-agent word fragment of $\POLS$ is $\PSPACE$-Complete.
% These results are obtained by providing a comparison with $\PAL$. %Section \ref{section:fragmentcomplexity} contains some more lower bound results: 
%$\PSPACE$-hardness for single-agent $\POL^-$. 
Section \ref{section:relatedwork} discusses related work, and Section \ref{section:perspectives} concludes the paper.

\begin{figure}[t]
    \hspace*{0pt}
 	\begin{center}
 \begin{tabular}{l|l|l}
  & Single-agent & Multi-agent \\
  \hline
Word $\POL^-$ & $\NP$-Complete & $\PSPACE$-Complete\\
\hline
$\POL^-$ & $\PSPACE$-Hard & $\NEXPTIME$-Complete
 \end{tabular}\
 \end{center}
 % 	\newcommand{\ybottom}{-1.2}
	% 	\newcommand{\sep}[1]{\draw (#1, 1.45) -- (#1, \ybottom);}
	% 	\tikzstyle{box} = [inner sep=0.4mm]
	% 	\tikzstyle{inclusion} = [-latex, line width=0.5mm]
	% 	\scalebox{0.9}{
	% 		\begin{tikzpicture}[xscale=2.2, yscale=1]
	% 			\draw (-1.6, \ybottom) rectangle (2.8, 1.42);
	% 			\draw (-1.4, 1) -- (2.8, 1);
	% 			\node[text width=3cm,box] (full) at (2.5, 0) {$\POL^-$};
	% 			\node[text width=15mm,box] (starfree) at (1.1, 0.5) {$\starfree$};
	% 			\node[text width=14mm] (exist) at (1.1, -0.5) {$\existential$};
	% 			\node[text width=15mm] (starfreeexist) at (0, -0.5) {Word Single-Agent};
	% 			\node[text width=8mm] (word) at (-1, 0.5) {$\word$ (Th.~\ref{theorem:word})};
	% 			\node at (1.5, 1.2) {$\NEXPTIME$-Complete};
	% 			\node at (0, 1.2) {$\NP$-Complete};
	% 			\node at (-1, 1.24) {$\NP$-Complete};
	% 			\sep {0.55}
	% 			\sep {-0.6}
	% 			\draw[inclusion] (starfree) -- (full);
	% 			\draw[inclusion] (exist) -- (full);
	% 			\draw[inclusion] (starfreeexist) edge[out=10, in=200] (starfree);
	% 			\draw[inclusion] (starfreeexist) -- (exist);
	% 			\draw[inclusion] (word) -- (starfree);
	% 	\end{tikzpicture}}
 % 	\end{center}
	\caption{Complexity results of satisfiability of various fragments of $\POL^-$.\label{figure:results}}
\end{figure}
 
 %proper 
 %fragments of $\POL$. %Let us consider the following example which is very similar in the lines of a drone example considered in \cite{DBLP:conf/ijcai/0001GGS22}:

%-------------- Maybe remove below, after POL Intro-------------%

 %As we would see later in this paper, the model checking problem of $\POL$ involves several intricacies and as such, a detailed study is worthwhile for a better understanding of the computational complexity of this decision problem. To model the interplay with the protocols, \citet{van2014hidden} also introduced an extension of $\POL$, namely \textit{Epistemic protocol logic (EPL)}, the detailed investigation on which we leave as a future work. % Consequently, we  focus on POL for our study of properties of epistemic reasoning in the context of our evolving surroundings.
 
 %\todo{refer to the figure}
 
 %\todo{motivate the frament, the presence of regular exressions}

%A natural way to study such properties is to consider the model checking problem of the relevant logic which helps us to verify the varied characteristics and features of different situations resulting from distinct expectations and (matching) observations.

\

\section{Background}\label{section:background}
In this section, we provide a brief overview of a fragment of public observation logic ($\POL$) \cite{van2014hidden}, which we term as $\POL^-$.   %Afterwards, we will also describe public announcement logic ($\PAL$) for facilitating the discussion of certain technical results. \todo{PAL is no more presented here no? it has been moved so update.}

%%%%%%%%%%%%%%POL-

\subsection{A fragment of \texorpdfstring{$\POL (\POL^-)$}{POLmIn}}

Let $\Ag$ be a finite set of agents,  $\BP$ be a countable set of propositions describing the facts about the state and $\Act$ be a finite set of actions. 

An \textit{observation} is a finite string of actions. In the vacuum bot example, an observation may be $\obsleft\obsdown\obsright\obsup$ and similar others. An agent may expect different potential observations to happen at a given state, but to model human/agent expectations, such expectations are described in a finitary way by introducing the {\em observation expressions} (as star-free regular expressions over $\Act$): 

\begin{defi}[Observation expressions]
Given a finite set of action symbols $\Act$, the language $\mathcal{L}_{\it obs}$ of {\em observation expressions} is defined by the following BNF:
$$\begin{array}{r@{\quad::= \quad}l}
  \pi  &
           \emptyset\mid\
           \ep\mid
	  a
 	      \mid \pi\cdot \pi
          \mid \pi + \pi\\
          %\mid \pi^*\\
\end{array}$$ 
\noindent where $\emptyset$ denotes the empty set of observations,  the constant~$\ep$ represents the empty string, and $a\in\Act$.
\label{definition:obsexpression}
\end{defi}

In the bot example, the observation expression $(\obsleft\cdot\obsdown + \obsright\cdot\obsup)$ models the expectation of the bot's movement in either way, towards the power source or the debris-disposal area, whereas $(\obsleft)^3\cdot(\obsdown)^3$ models the expectation of moving towards the power source. 

The size of an observation expression $\pi$ is denoted by $|\pi|$. The semantics for the observation expressions are given by {\em sets of observations} (strings over $\Act$), similar to those for regular expressions. Given an observation expression $\pi$, 
%the corresponding 
its
{\em set of observations} is denoted by $\LL(\pi)$. For example, $\LL(\obsright)=\{\obsright\}$, and $\LL(\obsleft\cdot\obsdown + \obsright\cdot\obsup) = \{\obsleft\obsdown, \obsright\obsup\}$. The (star-free) regular language $\pi\regdiv w$ is the set of words given by  $\{v \in \Act^* \mid wv\in\LL(\pi)\}$. The language $\prefixes(\pi)$ is the set of prefixes of words in $\LL(\pi)$, that is, $w\in \prefixes(\pi)$ 
			%iff $\exists v\in \Sigma^*$ such that $wv\in\LL(\pi)$ 
			iff $\exists v\in \Act^*$ such that $wv\in\LL(\pi)$ (namely, $\LL(\pi\regdiv w)\not=\emptyset$).
   %\footnote{For a more detailed explanation of these concepts, see the \textcolor{red}{supplementary materials}.}
			%\todo{say it first in words: $\prefixes(\pi)$ is the set of prefix of words in $\LL(\pi)$.}
%

\begin{example}
  $(\obsleft\cdot\obsdown)\regdiv\obsleft = (\obsleft\cdot\obsdown + \obsright\cdot\obsup)\regdiv\obsleft = \obsdown$, and $\prefixes(\obsleft\cdot\obsdown + \obsright\cdot\obsup) = \{\ep, \obsleft, \obsleft\obsdown, \obsright, \obsright\obsup\}.$
\end{example}

We now present a modified version of {\em epistemic expectation models} from \cite{van2014hidden} that capture the expected observations of agents. They can be seen as epistemic models together with, for each state, a set of potential or expected observations. Recall that an epistemic model is a tuple $\langle S, \sim ,V\rangle$ where $S$ is a non-empty set of states, $\sim$ assigns to each agent in $\Ag$ an equivalence relation $\sim_i \subseteq S \times S$, and $V : S \rightarrow 2^{\BP}$ is a valuation function.

\begin{defi}[Epistemic expectation model with finite observations]
\label{Model1}
An {\em epistemic expectation model with finite observations} $\M$ is a quadruple $\langle S, \sim ,V, \Exp\rangle,$ where $\langle S, \sim ,V\rangle$ is an epistemic model (the {\em epistemic skeleton} of $\M$) and  $\Exp : S \imp \LL_{\it obs}$ is an expected observation function assigning to each state an observation expression $\pi$ such that $\LL(\pi)\not=\emptyset$ (finite non-empty set of finite sequences of observations).   %expected to be `observed' at that state). 
%An {\em epistemic expectation state} is a pointed epistemic expectation model $\langle S, \sim ,V, \Exp,s\rangle$. 
A pointed epistemic expectation model with finite observations is a pair $(\M, s)$ where $\M = \langle S, \sim ,V, \Exp\rangle$ is an epistemic expectation model with finite observations and $s \in S$. In what follows we will use the `epistemic expectation model' to denote the `epistemic expectation model with finite observations'.
\end{defi}  

% \begin{example}
%     \begin{tikzpicture}[node distance=2cm]
    
%         \node(s) [circNode] {$s$};
%         \node(f) [circNode, right of=s] {$f$};

%         \draw[-] (s) -- node[anchor=north] {sender} (f);
    
%     \end{tikzpicture}
%     An Amazon seller when sends a package out for delivery, it waits and expects for a "success" or a "failure" message, and accordingly \textit{knows} whether the package was delivered. This scenario can be modeled using $\M = \langle S,V,\{R_{sender}\},Exp\rangle$ the worlds $S = \{s, f \}$ and the relation $R_{sender} = \{(s,s), (f,f), (s,f), (f,s) \}$ as shown in the figure (the reflexive adjascency is not shown). Let $p$ be a propositional letter denoting \textit{"The package was delivered"}. Hence $V(s) = \{p\}$ and $V(f) = \{\neg{p}\}$. Also we denote the expectations of the agent sender as $Exp(s) = a$ and $Exp(f) = b$, where $a$ denote the letter representing the success reply, and $b$ as the failure reply.
% \end{example}

Intuitively, $\Exp$ assigns to each state a set of potential or expected observations. We now provide the model definition of the example mentioned in the introduction (cf. Figure \ref{figure:motivationalexamplekripkemodel}) For the sake of brevity, 
we do not draw
the reflexive arrows. If the $\vbot$ moves one step left, $\obsleft$, then while Alice still considers moving to the debris-disposal area a possibility, Bob does not consider that possibility at all, as described by Example \ref{ex-intro}, and depicted by the edge in Figure \ref{figure:motivationalexamplekripkemodel} between the states $u$ and $t$, annotated by Alice and not Bob.

\tikzstyle{world} = [draw]
\begin{figure}
	\begin{center}
		\begin{tikzpicture}[yscale=1.3]
			\node[world] (s) {$debris$};
			\node at (-0.1, -0.4) {\small $\expwater$};
			\node[world] (t) at (4, 0) {$power$};
			\node at (4.3, -0.4) {\small $\exppower$};
			\node[world] (u) at (0, 1) {$debris$};
			\node at (3.0, 1) {\small $\expwatere$};
			\node[left = 0mm of s] {$s$};
			\node[right = 0mm of t] {$t$};
			\node[left = 0mm of u] {$u$};
			\draw (s) edge node[above] {Alice, Bob} (t);
			\draw (s) edge node[left] {Alice} (u);
			\draw (t) edge node[above] {Alice} (u);
		\end{tikzpicture}
	\end{center}
	\vspace{-5mm}
\caption{Model describing the initial knowledge of the two agents Alice and Bob about the expectation of the $\vbot$.\label{figure:motivationalexamplekripkemodel}}
\end{figure}
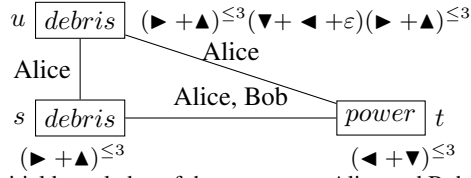

The logic $\POL$ was introduced to reason about agent knowledge via the matching of observations and expectations, and as we mentioned earlier, the difference between $\POL$ and $\POL^-$ is just a technical one. %In line of public announcement logic \cite{DBLP:journals/synthese/Plaza07}, it is assumed that 
The main idea expressed in these logics is the following: While observing an action, people would tend to delete some impossible scenarios where they would not expect that observation to happen.
For this purpose, the update of epistemic expectation models with respect to some observation $w\in\Act^*$ is provided below. %The idea behind an updated expectation model  is that we delete the states where the  observation $w$ could not have been happened.

\begin{defi}[Update by observation]\label{def.upobs}
Let $w$ be an observation over $\Act$ and let $\M=\langle S,\sim,V,\Exp\rangle$ be an epistemic expectation model. The updated model $\M|_w= \langle S',\sim',V',\Exp'\rangle$ is defined by: $S' = \{s\mid \LL(\Exp(s)\regdiv w)\not=\emptyset\}$, ${\sim'_i}={\sim_i}|_{S'\times
	 S'},$ $V'=V|_{S'},$ and $\Exp'(s)=\Exp(s)\regdiv w$.
\end{defi}

The main idea of the updated model is 
to delete the states where the observation $w$ could not have happened. To reason about agent expectations and observations, the language for $\POL^-$ is provided below.

\begin{defi}[$\POL^-$ syntax]  Given a countable set of {\em propositional variables} $\BP$, a finite sets of {\em actions} $\Act$, and a finite set of {\em agents} $\Ag$, 
the {\em formulas} $\phi$ of $\POL^-$ are given by:
$$\begin{array}{r@{\quad::= \quad}l}
\phi  &
  \top
         \mid
           p
           \mid \neg \phi
           \mid \phi \land \phi
           \mid K_i\phi
           \mid [\pi] \phi
% 	   \mid [!\pi]\phi
 % \pi  &
           %\ep\mid
 	 % \dl\mid	
	  %a
% 	  \mid ?\phi_b
          %\mid \pi\cdot \pi
         % \mid \pi + \pi
         % \mid \pi^*\\
\end{array}$$
\noindent where $p\in\BP$, $i\in\Ag$, and $\pi\in\mathcal{L}_{\it obs}$. 
\end{defi}
 Intuitively, $K_i\phi$ says that `agent $i$ knows $\phi$ and $[\pi]\phi$ says that `after any observation in $\pi$, $\phi$ holds'. The other propositional connectives are defined in the usual manner. We also define $\ldiaarg{\pi}\phi$ as $\lnot [\pi]\lnot\phi$ and $\hat{K}_i\phi$ as $\lnot K_i \lnot\phi$. %We will mostly use these modalities in our proofs. 
Typically, $\ldiaarg{\pi}\phi$ says that `there exists an observation in $\pi$ such that $\phi$ holds'. Formula $\hat{K}_i\phi$ says that `agent $i$ imagines a state in which $\phi$ holds'.

 The logic $\POL^-$ is the \textbf{$\starfree$ fragment of $\POL$}, that is, it is the set of formulas in which the $\pi$'s do not contain any Kleene star $*$. A more restricted version is the \textbf{$\word$ fragment of $\POLS$}, where $\pi$'s are words, that is,  observation expressions without $+$ operators. We consider both the \textbf{single-agent word fragment of $\POL^-$}, and \textbf{multi-agent word fragment of $\POL^-$}. Furthermore, we consider \textbf{single-agent $\POL^-$}, and \textbf{multi-agent $\POL^-$} (full $\POL^-$).   %is the set of formulas for which there is an odd number of negations in front of $K_i$ and $[\pi]$ modalities. Equivalently, it corresponds to formulas in negative normal form in which operators are $\ldiaarg{\pi}$ and $\hat{K}_i$. Finally, we have the \textbf{$\starfree$ $\existential$ fragment of $\POL$} which is the $\existential$ fragment with the extra guarantee that the $\pi$'s do not contain any Kleene star $*$.
 
\begin{defi}[Truth definition for $\POL^-$] Given an epistemic expectation model $\M$ = $(S, \sim ,V, \Exp)$, a state $s\in S$, and a $\POLS$-formula $\phi$, the truth of $\phi$ at $s$, denoted by $\M,s\vDash \phi$, is defined by induction on $\phi$ as follows: 
$$
\begin{array}{rcl}
%& & \\
\M,s\vDash p &\Leftrightarrow& p\in V(s)\\
\M,s\vDash \neg\phi &\Leftrightarrow&   \M,s\nvDash \phi \\
\M,s\vDash \phi\land \psi &\Leftrightarrow& \M,s\vDash \phi \textrm{ and } \M,  s\vDash \psi \\
\M,s\vDash K_i\phi &\Leftrightarrow& \textrm{for all }t: (s\sim_i t  \textrm{ implies } \M,t\vDash\phi)\\
%\M,s\vDash [\pi]\phi &\Leftrightarrow& \textrm{for all }w\in\LL(\pi): (w\in \prefixes(\Exp(s))  \\
%&& \textrm{ implies }\M|_w,s\vDash \phi)\\ & &\\
\M,s\vDash [\pi]\phi &\Leftrightarrow& \textrm{for all observations }w\textrm{ over }\Sigma, \\
				&& w\in\LL(\pi) \cap \prefixes(\Exp(s))  \\
				&& \textrm{ implies }\M|_w,s\vDash \phi
\end{array}
$$
%\todo{François changed the presentation of semantics of $[\pi]$ so that the box has no overfull}

\noindent where $\prefixes(\pi)$ is the set of prefixes of words in $\LL(\pi)$, that is, $w\in \prefixes(\pi)$ 
%iff $\exists v\in \Sigma^*$ such that $wv\in\LL(\pi)$ 
iff $\exists v\in \Act^*$ such that $wv\in\LL(\pi)$ (namely $\LL(\pi\regdiv w)\not=\emptyset$).%\todo{say it first in words: $\prefixes(\pi)$ is the set of prefix of words in $\LL(\pi)$.}
\end{defi}
 
The truth of $K_i\phi$ at $s$ follows the standard possible world semantics of epistemic logic. The formula $[\pi]\phi$ holds at $s$ if for every observation $w$ in the set $\LL(\pi)$  that matches with the beginning of (i.e., is a prefix of) some expected observation in $s$, $\phi$ holds at $s$ in the updated model $\M|_w$. Note that $s$ is a state in $\M|_w$ because $w \in \prefixes(\Exp(s))$. Similarly, the truth definition of $\ldiaarg{\pi}\phi$ can be given as follows: $\M,s\vDash \ldiaarg{\pi}\phi \textrm{ iff there exists }w\in\LL(\pi) \cap \prefixes(\Exp(s)) 
 \textrm{ such that }\M|_w,s\vDash \phi$. Intuitively, the formula $\ldiaarg{\pi}\phi$ holds at $s$ if there is an observation $w$ in $\LL(\pi)$  that matches with the beginning of some expected observation in $s$, and $\phi$ holds at $s$ in the updated model $\M|_w$. For the example described earlier, we have:

\begin{itemize}
\item[-] $\M, t \models [\obsleft](K_{Bob} \neg debris \land \hat{K}_{Alice} debris)$, if the $\vbot$ moves one step left, $\obsleft$, then while Alice still considers moving to the debris-disposal area a possibility, Bob does not consider that possibility at all.
\end{itemize}

%The last two examples are in the $\word$ fragment and show that we can check whether some sequence of observations leads to some epistemic property.

\ 

\noindent\textbf{Satisfiability problem for $\POL^-$: } %We now formally describe the model checking problem for $\POL$ and its fragments. 
Given a formula $\phi$, does there exist a pointed epistemic expectation model $\M,s$ such that  $\M,s \models \phi$? We investigate the complexity of this problem. The fragments of $\POL^-$ that we consider are (i) single-agent word fragment, (ii) multi-agent word fragment, (iii) single-agent $\POL^-$, and, (iv) full $\POL^-$.

\section{An application} %\section{Application of the Satisfiability Problem of $\POLS$}
\label{section:application}
Let us now consider a scenario which can be aptly described using the satisfiability problem of $\POLS$. We go back to the cleaning bot example introduced earlier. Let Alice be agent $a$ and Bob be agent $b$. Suppose the $\vbot$ is moving towards the power source without making any error. Evidently, the possibilities considered by the agents, based on the information available to them are given as follows: %The formulas that represent the truth of the scenario where the bot is moving towards debris-disposal with no error is as follows:
\begin{itemize}
    \item[-] Possibilities considered by Alice who has the information about the glitch in the bot:
    \begin{align*}
        \hat{K_a}debris\wedge \hat{K_a}\ldiaarg{\obsleft + \obsdown}debris\wedge\hat{K_a}power
    \end{align*}
    \item[-] Possibilities considered by Bob who is not aware of the glitch in the bot:
    \begin{align*}
        \hat{K_b}debris\wedge\hat{K_b}power
    \end{align*}
\end{itemize}

Now, we model the \emph{expectations} as follows: Consider the expression, $\pi^p_n = (\obsdown + \obsleft)^n$ that represents a sequence of moves of length $n$ the bot can make to get to to the power source without any error. We use a formula $P_n$ to express the following: As long as the bot is observed to make $n$ many moves towards the power source, reaching it is still a possibility.
    
    \begin{align*}
        P_n = &(\ldiaarg{\obsleft}\top\wedge\ldiaarg{\obsdown}\top)\\
        &\wedge[\pi^p_1](\ldiaarg{\obsleft}\top\wedge\ldiaarg{\obsdown}\top)\\
        &\wedge[\pi^p_2](\ldiaarg{\obsleft}\top\wedge\ldiaarg{\obsdown}\top)\ldots\\
        &\wedge[\pi^p_{n}](\ldiaarg{\obsleft}\top\wedge\ldiaarg{\obsdown}\top)
    \end{align*}
    The first conjunct of $P_n$ translates to move towards the power source, a move towards down or left can be observed. The second conjunct translates to the following: after the observation of a single left or down movement, another left or down movement can be observed. The other conjuncts can be described similarly.
    
    For the scenario described in the introduction, we can consider $P_n$ to create a formula where n is at most 3, without an error. %This formula can be  to make it for at most 14 moves with an error or with no error, 
    Let us denote such a formula by $\psi_{p}$. Similarly, a formula can express the movement towards debris-disposal with at most one error and with no error as $\psi_{de}$ and $\psi_d$, respectively.
    A situation where the bot is moving towards the power source without any error, but $a$ considers the possibility of moving towards debris-disposal with an error can be expressed as $\hat{K_a}\psi_{de}\wedge\psi_{p}$. Similarly, a formula can be considered for modelling the expected observation when both the agents consider the possibility of the bot moving towards debris-disposal area without an error: $\hat{K_a}\psi_{d}\wedge\hat{K_b}\psi_{d}$. 
We call the (finite) set of all such formulas, $\Gamma_{p}$. 
Similarly, we can construct a set $\Gamma_{de}$ of formulas, when the bot can make an error while going towards debris-disposal area or $\Gamma_{d}$ when it is moving towards the debris-disposal without any error. 

Suppose we want to conclude the following in the current scenario: After one wrong move, $b$ knows that the bot is not moving towards debris-disposal, but $a$ still considers the possibility. The formula, $\mathit{INFO}_{ab}$, say, turns out to be
$$
\ldiaarg{\obsdown + \obsleft}(K_b power\wedge \hat{K_a}debris)
$$
The actual scenario is that the bot is indeed moving towards $power$. Hence, to check whether $\mathit{INFO}_{ab}$ can be concluded in this scenario,   a satisfiability solver for $\POL^-$ can check the (un)satisfiability of the formula
$$
\neg{((\bigwedge_{\psi\in\Gamma_{p}}\psi)\rightarrow \mathit{INFO}_{ab})}
$$

%As per definition of satisfiability, the above formula is not satisfiable its negation is valid and hence implying $\ldiaarg{\obsdown + \obsleft}(K_b power\wedge \hat{K_a}debris)$ is a valid conclusion in scenario $\Gamma_{pw}$.

\section{Algorithm for the Satisfiability Problem of \texorpdfstring{$\POLS$}{POLSatAlgo}}\label{section:starfreetableau}
In this section, we design a proof system using the tableau method to prove satisfiability of $\POLS$.  

% The terms of tableau are similar to that of single except for the fact that instead of terms like $(\sigma, \sigma')$ we introduce $(\sigma, \sigma')_i$ for some $i\in Agt$.\todo{remove.. single agent is not here anymore}
A term in a tableau proof is of the form $(\sigma\ \ w\ \ \psi)\mid (\sigma\ \ w\ \ \checkmark)\mid (\sigma,\sigma')_i$, where $i\in Agt$. The $\sigma$ is called a state label that represents a state in the model, $w\in\Sigma^*$ is a word over a finite alphabet and $\psi$ is a formula in $\POLS$.

The term $(\sigma\ \ w\ \ \psi)$ represents the fact that the state labelled by $\sigma$ survives after the model is projected on the word $w$, and after projecting on $w$, $\psi$ holds true in the state corresponding to $\sigma$.

The term $(\sigma\ \ w\ \ \checkmark)$ represents the fact that the state labelled by $\sigma$ survives after the model is projected on word $w$.

The term $(\sigma_1,\sigma_2)_i$ represents in the model, the states represented by $\sigma_1$ and $\sigma_2$ should be indistinguishable for the agent $i\in Agt$, where $Agt$ is a finite set of agents.

For space reasons, the term $(\sigma_1, \sigma_2)_{i\in Agt}$ stands for the set of terms $\{(\sigma_1,\sigma_2)_i\mid i\in Agt\}$.

Without loss of generality, the formula $\varphi$ is assumed to be in Negative Normal form, the syntax of which is as follows:
\begin{align*}
    \varphi := & \top\ \  |\ \  p\ \  |\ \  \neg p\ \  |\ \  \psi \vee \chi\ \ |\ \  \psi \wedge \chi\ |\ \\
    &  \ \hat{K_i}\psi\ \ |\ \ K_i\psi\ \ |\ \ \ldiaarg\pi\psi\ \ |\ \ [\pi]\psi
\end{align*}
%In short, the negation is only in front of a propositional variable. 

Given a formula we denote by $\varphi$, $FL(\varphi)$  the Fischer-Ladner Closure of $\varphi$, (see \cite{DL}).
\subsection{The Tableau Rules}
% The tableau rules remain the same as for word multi- agent, except we add the following decomposition rules to handle the non-deterministic choice operator as part of our Box and Diamond rules. The proof of termination, soundness and completeness are similar to word multi-agent and starfree single agent (for the non-deterministic choice operator in regular expression). We skip these proof and go straight for the upper bound proof.
The tableau rules for this fragment have been shown in Figure \ref{fig:tableaurules}. Here an inference rule looks like this:
\infer{C_1 | C_2 | \ldots | C_n}{%
                        A
                        }.

Here each $C_i$ and $A$ is a set of tableau terms. The $C_i$s are called consequences, $A$ is the antecedent. Intuitively the rule is interpreted as "If all the terms in  $A$ are true, then all the terms in at least one of $C_i$'s are true".

% \todo{explain the tableau rules, Especially the one with the residual}

\newcommand{\tableaurulesection}[1]{\\ \multicolumn{2}{c}{\textbf{#1}} \\}
\newcommand{\tableaurule}[2]{\raisebox{2.5mm}{#1} & #2 \\[2mm]}
% \subsection{Rules}

\begin{figure}[t!]
	\scalebox{0.8}{
\begin{tabular}{p{4cm}c}
    \multicolumn{2}{c}{\textbf{Propositional Rules}}\\
    \tableaurule{Clash rule}{
        \infer{\bot}{%
                        (\sigma
                        & w
                        & p),
                        & (\sigma
                        & w
                        & \neg p)
                        }
                    }
 %       \tableaurule{Constant Valuation}
 %       {\infer{(\sigma\ \ w\ \ [a]p)}{%
 %       		(\sigma
 %       		& w
 %       		& p)
 %       	}~~~~~
  %       \infer{(\sigma\ \ w\ \  [a]\neg p)}{%
   %     		(\sigma
    %    		& w
     %   		& \neg p)
      %  	}}
        
        \tableaurule{ AND rule}
        {\infer{(\sigma\ \ w\ \ \psi),(\sigma\ \ w\ \  \chi)}{%
        		(\sigma
        		& w
        		& \psi\wedge\chi)
        }}
        
        \tableaurule{ OR rule}
        {\infer{(\sigma\ \ w\ \ \psi)\ \ |\ \ (\sigma\ \ w\ \ \chi)}{%
        		(\sigma
        		& w
        		& \psi\vee\chi)
        }}

    \tableaurulesection{Knowledge Rules}

      \tableaurule{Knowledge}
      { 
    \infer{(\sigma'\ \ w\ \ \psi)}{%
    	(\sigma
    	& w
    	& K_i\psi),
    	&(\sigma'
    	& w
    	& \checkmark),
    	&(\sigma, \sigma')_i
    }%, $i\in Agt$
}
    
    \tableaurule{Possibility}
    {
    \infer{(\sigma,\sigma_n)_i, (\sigma_n\ \ w\ \ \checkmark), (\sigma_n\ \ w\ \  \psi), (\sigma_n, \sigma_n)_{i\in Agt}}{%
    	(\sigma
    	& w
    	& \hat{K_i}\psi)
    }}

 \tableaurule{Transitivity}{
    \infer{(\sigma,\sigma')_i}{%
    	(\sigma, \sigma'')_i,
    	& (\sigma'',\sigma')_i
    }}
 \tableaurule{Symetry}{
    \infer{(\sigma,\sigma')_i}{%
    	(\sigma', \sigma)_i
    }, $i\in Agt$}

    \tableaurulesection{Diamond and Box Rules}
    
     \tableaurule{Diamond Decompose}{
    \infer{(\sigma\ \ w\ \ \langle \pi\rangle\langle \pi'\rangle\psi)}{%
    	(\sigma
    	& w
    	& \langle \pi\pi'\rangle\psi)
    }}
    
    % \item Box Decompose:
    % \infer{(\sigma\ \ w\ \ [a][\pi]\psi)}{%
    	%                 (\sigma
    	%                 & w
    	%                 & [a\pi]\psi)
    	%                 }
    
    \tableaurule{ Diamond ND Decompose}{
    \infer{(\sigma\ \ w\ \ \ldiaarg{\pi_1}\psi)\ \ |\ \ (\sigma\ \ w\ \ \ldiaarg{\pi_2}\psi)}{%
    	(\sigma
    	& w
    	& \ldiaarg{\pi_1 + \pi_2}\psi)
    }}
    
    % \item Box ND Decompose:
    % \infer{(\sigma\ \ w\ \ [\pi_1]\psi), (\sigma\ \ w\ \ \ [\pi_2]\psi)}{%
    	%                 (\sigma
    	%                 & w
    	%                 & [\pi_1 + \pi_2]\psi)
    	%                 }
    
    \tableaurule{ Diamond Project}{
    \infer{(\sigma\ \ wa\ \ \checkmark), (\sigma\ \ wa\ \ \psi)}{%
    	(\sigma
    	& w
    	& \langle a\rangle\psi)
    }}
    
    \tableaurule{ Box Project}{
    \infer{(\sigma\ \ wa\ \ [\pi\regdiv a]\psi)}{%
    	(\sigma
    	& w
    	& [\pi]\psi),
    	& (\sigma\ \ wa\ \ \checkmark)
    }}
    
    \tableaurule{ Empty Box}{
    \infer{(\sigma\ \ w\ \ \psi)}{%
    	(\sigma
    	& w
    	& [\epsilon]\psi)
    }}
    
%    \tableaurule{ Project Determinism}{
%    \infer{(\sigma\ \ w\ \ [a]\psi)}{%
%    	(\sigma
%    	& w
%    	& \langle a\rangle\psi)
%    }}

    \tableaurulesection{Survival Rules}

    \tableaurule{Constant Valuation Up}{
    \infer{(\sigma\ \ \epsilon\ \ p)}{%
    	(\sigma
    	& w
    	& p)
    }~~~~~
    \infer{(\sigma\ \ \epsilon\ \ \neg p)}{%
    	(\sigma
    	& w
    	& \neg p)
    }%, $p\in\mathcal{P}$
}
    
    \tableaurule{Survival Chain}
    {    \infer{(\sigma\ \ w\ \ \checkmark)}{%
    	(\sigma
    	& wa
    	& \checkmark)
    }
%, $a\in\Sigma, w\in\Sigma^*$
}

\end{tabular}
}
\caption{Tableau rules. $\sigma$ is any state symbol, $w$ is any word, $p$ is any propositional variable, $i$ is any agent, $\pi$ is any regular expression, $a$ is any letter.}\label{fig:tableaurules}
\end{figure}

In Figure \ref{fig:tableaurules}, the left column is the rule name and the right column is the rule. For example, the Box Project Rule states that "The state labelled by $\sigma$ survives after projection on word $w$ and it satisfies $[\pi]\psi$ ($(\sigma\ \ w\ \ [\pi]\psi)$) and $\sigma$ still survives a further projection on letter $a$($(\sigma\ \ wa\ \ \checkmark)$) then after further projection on $a$, $[\pi\regdiv a]\psi$ should hold true in the state labelled by $\sigma$ ($(\sigma\ \ wa\ \ [\pi\regdiv a]\psi)$).". Recall $\pi\regdiv a$ denotes the residual of $\pi$ by $a$ (see Section \ref{section:background}). 

Similarly, the Diamond Project rule says that if a certain state $\sigma$, under some word projection $w$ has to satisfy $\ldiaarg{a}\psi$, then that state $\sigma$ has to survive projection on $wa$ and also satisfy $\psi$ under the same projection.

A tableau proof can be assumed a tree. Each node of the tree is a set of tableau terms $\Gamma$. An inference rule can be applied in the following way:

If $A\subseteq\Gamma$ and $C_i$'s are not in $\Gamma$, the children of $\Gamma$ are $\Gamma\union C_i$ for each $i\in[n]$.

When no rules can be applied on a $\Gamma$, we say $\Gamma$ is saturated (leaf node in the proof tree).

If $\bot\in\Gamma$, we say that branch is \textbf{closed}. If all branch of the proof tree is \textbf{closed}, we say the \textbf{tableau is closed}, else is open.

Given a $\POLS$ formula $\varphi$, we start with $\Gamma = \{(\sigma\ \ \epsilon\ \ \varphi), (\sigma\ \ \epsilon\ \ \checkmark)\}\union \set{(\sigma,\sigma)_i, i\in Agt}$.

\begin{example}Suppose we aim at deciding whether $$\phi := \hat{K_i}\ldiaarg{a}p\wedge \ldiaarg{a}K_i \lnot p$$ is satisfiable or not. For simplicity we suppose there is a single agent $i$. Here are the terms added to the set of terms:

\begin{enumerate}
    \item $(\sigma\ \ \epsilon\ \ \varphi)$, $(\sigma\ \ \epsilon\ \ \checkmark)$,  $(\sigma,\sigma)_i$\hfill (initialization) 
    \item $(\sigma\ \ \epsilon\ \ \hat{K_i}\ldiaarg{a}p)$, $(\sigma\ \ \epsilon\ \ \ldiaarg{a}K_i \lnot p$) \hfill by AND rule
    \item $(\sigma'\ \ \epsilon\ \ \ldiaarg{a}p), (\sigma'\ \ \epsilon\ \ \checkmark), (\sigma,\sigma')_i, (\sigma', \sigma')_i$\hfill by Possibility rule
    \item $(\sigma', \sigma)_i$\hfill by Symmetry rule
    \item $(\sigma'\ \ a\ \ p), (\sigma'\ \ a\ \ \checkmark)$\hfill by Diamond Project on 2
    \item $(\sigma\ \ a\ \ \checkmark), (\sigma\ \ a\ \ K_i\lnot p)$\hfill by Diamond Project on 2
    \item $(\sigma'\ \ a\ \ \lnot p)$\hfill by Knowledge rule on 3, 5, 6
    \item $\bot$ \hfill by Clash rule on 5,7
\end{enumerate}

As we obtain $\bot$, the formula $\phi$ is not satisfiable (by the upcoming Theorem~\ref{thm:starfreemultisound}).
\end{example}

\subsection{Soundness and Completeness of the Tableau Rules}
In this section, we provide the soundness and completeness proof of the Tableau method for the satisfiability of $\POLS$
\begin{theorem}\label{thm:starfreemultisound}
Given a formula $\varphi$, if $\varphi$ is satisfiable, then the tableau for $\Gamma = \{(\sigma\ \ \epsilon\ \ \varphi), (\sigma\ \ \epsilon\ \ \checkmark), (\sigma,\sigma)_{i\in Agt}\}$ is open.
\end{theorem}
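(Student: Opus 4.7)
The plan is to prove soundness of the tableau system by showing that each rule preserves a notion of ``realizability'' of term-sets, so that a tableau beginning from a realizable set cannot close on every branch.

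First I would define what it means for a set $\Gamma$ of tableau terms to be \emph{realized} by a pointed epistemic expectation model $(\M,s)$ together with a map $\rho$ from state labels to states of $\M$. The intended clauses are: $(\sigma\ \ w\ \ \psi)\in\Gamma$ is realized iff $w\in\prefixes(\Exp(\rho(\sigma)))$ and $\M|_w,\rho(\sigma)\models\psi$; $(\sigma\ \ w\ \ \checkmark)\in\Gamma$ is realized iff $w\in\prefixes(\Exp(\rho(\sigma)))$; and $(\sigma,\sigma')_i\in\Gamma$ is realized iff $\rho(\sigma)\sim_i\rho(\sigma')$. A set $\Gamma$ is realizable if some $(\M,s,\rho)$ realizes every term in it. The base case is immediate: if $\varphi$ is satisfied by some $(\M,s)$, then setting $\rho(\sigma)=s$ realizes the initial set $\{(\sigma\ \ \epsilon\ \ \varphi),(\sigma\ \ \epsilon\ \ \checkmark)\}\cup\{(\sigma,\sigma)_i\mid i\in\Ag\}$, because $\M|_\epsilon=\M$ and each $\sim_i$ is reflexive.

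Next, I would verify, rule by rule, that realizability is preserved downward along at least one child in the tableau tree. For the propositional rules (AND, OR) and for the modal rules for $K_i$ (Knowledge, Symmetry, Transitivity) this is routine from the semantics of $\POL^-$ and the equivalence-relation assumption on $\sim_i$. For the Possibility rule, where $(\sigma\ \ w\ \ \hat K_i\psi)$ forces the introduction of a fresh label $\sigma_n$, I would extend $\rho$ by choosing a witness state $t$ in $\M|_w$ with $\rho(\sigma)\sim_i t$ and $\M|_w,t\models\psi$, and note that $t$ survives in $\M|_w$, so $w\in\prefixes(\Exp(t))$; the reflexive pairs $(\sigma_n,\sigma_n)_i$ are also realized. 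For the Clash rule, an instance $\{(\sigma\ \ w\ \ p),(\sigma\ \ w\ \ \neg p)\}$ is never realized, so any branch reaching $\bot$ is non-realizable.

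The key conceptual steps are the observation-expression rules. For Diamond Project and Diamond Decompose I would use the definition $\M,s\models\ldiaarg\pi\phi$ in terms of a witness word $w\in\LL(\pi)\cap\prefixes(\Exp(s))$ together with the compositional identity $\LL(\pi\cdot\pi')=\LL(\pi)\cdot\LL(\pi')$ and the fact that updating by $wa$ equals first updating by $w$ then by $a$. For the Box Project rule I would invoke the residual identity $\LL(\pi\regdiv a)=\{v\mid av\in\LL(\pi)\}$: if $\M|_w,\rho(\sigma)\models[\pi]\phi$ and $wa\in\prefixes(\Exp(\rho(\sigma)))$, then any $v\in\LL(\pi\regdiv a)\cap\prefixes(\Exp'(\rho(\sigma)))$ in $\M|_{wa}$ corresponds to $av\in\LL(\pi)\cap\prefixes(\Exp|_w(\rho(\sigma)))$ in $\M|_w$, and $(\M|_w)|_{av}=\M|_{wav}$. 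Empty Box and Diamond ND Decompose are direct. For the Survival rules, Constant Valuation Up uses that atomic propositions are invariant under observation updates (the valuation $V$ restricts but does not change), and Survival Chain uses that $wa\in\prefixes(\Exp(t))$ implies $w\in\prefixes(\Exp(t))$.

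Finally, I would conclude by induction on the tableau construction: starting from a realizable $\Gamma_0$ determined by the satisfying pointed model for $\varphi$, at every rule application there is a child whose term-set is still realizable (possibly after extending $\rho$ for freshly-introduced labels). Since no realizable set contains $\bot$, the branch of realizable nodes never closes, so the tableau is open. The main obstacle I anticipate is the Box Project rule: one must argue carefully that residuation commutes with update and prefix-checking so that the semantic $[\pi]$ quantification at $\M|_w$ translates exactly into the syntactic residual $[\pi\regdiv a]$ at $\M|_{wa}$; every other case reduces to standard modal-logic soundness bookkeeping.
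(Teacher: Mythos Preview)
Your proposal is correct and follows essentially the same approach as the paper: both arguments show that each tableau rule preserves the existence of a satisfying model along at least one child, so a branch through realized sets never closes. Your formulation via an explicit realizability predicate and label assignment $\rho$ is somewhat cleaner than the paper's induction on tableau depth with case analysis on the outermost connective, but the underlying soundness argument is the same.
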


\begin{theorem}\label{thm:completeness}
Given a formula $\varphi$, if the tableau for $\Gamma = \{(\sigma\ \ \epsilon\ \ \varphi), (\sigma\ \ \epsilon\ \ \checkmark), (\sigma,\sigma)_{i\in Agt}\}$ is open, then $\varphi$ is satisfiable.
\end{theorem}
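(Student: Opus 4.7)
The plan is to carry out the standard ``canonical model from open saturated branch'' argument, adapted to the tableau terms used here. Suppose the tableau starting from $\Gamma_0 = \{(\sigma_0\ \epsilon\ \varphi), (\sigma_0\ \epsilon\ \checkmark), (\sigma_0,\sigma_0)_{i\in Agt}\}$ is open, fix an open saturated branch, and let $\Gamma^*$ be the set of all terms appearing along it. The goal is to define an epistemic expectation model $\M$ on the state labels occurring in $\Gamma^*$ and then prove a truth lemma yielding $\M,\sigma_0\models\varphi$.

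The model $\M = \langle S,\sim,V,\Exp\rangle$ is built as follows. Take $S$ to be the set of state labels appearing in $\Gamma^*$. Put $\sigma\sim_i\sigma'$ iff $(\sigma,\sigma')_i\in\Gamma^*$; the Symmetry and Transitivity rules, combined with the reflexive pairs added at initialization and by the Possibility rule, force $\sim_i$ to be an equivalence relation. Let $V(\sigma)=\{p : (\sigma\ \epsilon\ p)\in\Gamma^*\}$; the Clash rule makes this consistent with negative literals, and the Constant Valuation Up rule makes propositional truth invariant under projection. Let $W_\sigma=\{w : (\sigma\ w\ \checkmark)\in\Gamma^*\}$ and let $\Exp(\sigma)$ be any star-free observation expression whose language is $W_\sigma$. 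Since every $\checkmark$-term is produced either by initialization, by the Possibility rule (at a fixed word), or by the Diamond Project rule (driven by a single-letter diamond in the Fischer-Ladner closure of $\varphi$), and since word length is bounded by the nesting of letter-diamonds in $\varphi$, $W_\sigma$ is finite and hence star-free definable. The Survival Chain rule makes $W_\sigma$ prefix-closed, so $W_\sigma = \prefixes(\Exp(\sigma))$, and $\epsilon\in W_\sigma$ guarantees $\LL(\Exp(\sigma))\neq\emptyset$.

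The truth lemma then says: for every $(\sigma\ w\ \checkmark)\in\Gamma^*$, $\sigma$ is a state of $\M|_w$, and for every $(\sigma\ w\ \psi)\in\Gamma^*$, $\M|_w,\sigma\models\psi$. The first part holds by construction of $\Exp$. The second is proved by induction on a rank combining the Fischer-Ladner complexity of $\psi$ with a measure on its outermost modality: propositional and Boolean cases use the Clash and AND/OR rules; the cases $K_i\psi$ and $\hat{K}_i\psi$ use the Knowledge and Possibility rules together with the equivalence-class structure of $\sim_i$; the diamond cases split according to the three Diamond rules (Diamond Decompose strictly reduces expression size, Diamond ND Decompose picks one disjunct, Diamond Project advances the word while strictly reducing the modality); the box cases $[\pi]\psi$ and $[\epsilon]\psi$ use Box Project and Empty Box, iterated along each $u\in\LL(\pi)\cap\prefixes(\Exp(\sigma)\regdiv w)$, each step licensed by the corresponding $\checkmark$-term via the first claim. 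Applying the lemma to $(\sigma_0\ \epsilon\ \varphi)\in\Gamma^*$ yields $\M,\sigma_0\models\varphi$, witnessing satisfiability.

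The main obstacle is the box case of the truth lemma, and in tandem the faithfulness of $\Exp$: one must confirm that for every $u$ quantified over in the semantics of $[\pi]\psi$, the tableau actually contains $(\sigma\ wu\ \checkmark)$ and, step by step via Box Project, the term $(\sigma\ wu\ [\pi\regdiv u]\psi)$, so that Empty Box finally delivers $(\sigma\ wu\ \psi)$ once $\epsilon\in\LL(\pi\regdiv u)$. This amounts to matching the set $W_\sigma$ built from tableau-generated $\checkmark$-terms with the semantic notion $\prefixes(\Exp(\sigma))$ on every branch of every nested box, and relies on the finiteness of the set of residuals of any star-free expression appearing on the branch together with the closure properties guaranteed by the Survival Chain and Diamond Project rules.
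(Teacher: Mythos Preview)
Your proposal is correct and follows essentially the same approach as the paper: extract a saturated open branch, build an epistemic expectation model on the state labels, and prove a two-part truth lemma (survival from $\checkmark$-terms, then formula truth by induction). The only noteworthy difference is in the definition of $\Exp$: the paper sets $\Exp(s_\sigma)$ to be the sum of the $\leq_{pre}$-\emph{maximal} words $w$ with $(\sigma\ w\ \checkmark)\in\Gamma_l$, whereas you take the language of $\Exp(\sigma)$ to be the entire prefix-closed set $W_\sigma$. Both choices yield the same $\prefixes(\Exp(\sigma))=W_\sigma$, so the semantics coincide; your choice has the mild advantage that the survival claim (your ``first part'') becomes immediate by construction rather than requiring the separate inductive argument the paper gives as Claim~1. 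Your identification of the box case as the crux, handled by iterating Box Project along the checkmark chain and finishing with Empty Box, matches the paper's treatment.
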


The proof of Theorem~\ref{thm:starfreemultisound} is done by induction. We shift the proof of Theorem~\ref{thm:starfreemultisound} to the appendix. We now present the proof of Theorem~\ref{thm:completeness}.

\begin{proof}[Proof of Theorem~\ref{thm:completeness}]
Since by assumption, the tableau for $\Gamma = \{(\sigma\ \ \epsilon\ \ \varphi), (\sigma\ \ \epsilon\ \ \checkmark), (\sigma,\sigma)_{i\in Agt}\}$ is open, there exists a branch in the tableau tree where in the leaf node there is a set of terms $\Gamma_l$ such that it is saturated and $\bot\notin\Gamma_l$.

For the purpose of this proof, let us define a relation over the words $\bar{w}$ that appears in $\Gamma_l$. For any two word $\bar{w}_1$ and $\bar{w}_2$ that appears in $\Gamma_l$, $\bar{w}_1\leq_{pre}\bar{w}_2$ if and only if $\bar{w}_1\in\prefixes(\bar{w}_2))$. Now, this relation is reflexive ($\bar{w}_1\in\prefixes(\bar{w}_1)$), asymmetric (if $\bar{w}_1\in\prefixes(\bar{w}_2)$ and $\bar{w}_2\in\prefixes(\bar{w}_1)$ then $\bar{w}_1 = \bar{w}_2$) and transitive (if $\bar{w}_1\in\prefixes(\bar{w}_2)$ and $\bar{w}_2\in\prefixes(\bar{w}_3)$ then $\bar{w}_1\in\prefixes(\bar{w}_3)$). Hence this relation creates a partial order among all the words occurring in $\Gamma_l$. We also denote $\bar{w_1}<_{pre}\bar{w_2}$ to interpret the fact that $\bar{w_1}\leq_{pre}\bar{w_2}$ and $\bar{w_1}\neq\bar{w_2}$.

Now we create a model $\M = \ldiaarg{W, \{R_i\}_{i\in Agt}, V, Exp}$ out of $\Gamma_l$ and prove that $\varphi$ is satisfied by some state in the model.

\begin{itemize}
    \item $W = \{s_\sigma\mid \sigma\mbox{ is a distinct label in the terms occuring in }\Gamma_l\}$
    
    \item $R_i = \{\{s_{\sigma_1}, s_{\sigma_2}\}\mid(\sigma_1,\sigma_2)_i\in\Gamma_l\}$
    
    \item $V(s_\sigma) = \{p\mid (\sigma\ \ \epsilon\ \ p)\in\Gamma_l\}$
    
    \item $Exp(s_\sigma) = \sum_{w\in\Lambda_\sigma}w$, where $\Lambda_\sigma = \{w\mid (\sigma\ \ w\ \ \checkmark)\in\Gamma_l\mbox{ and }\nexists w':((\sigma\ \ w'\ \ \checkmark)\in\Gamma_l\mbox{ and }w<_{pre} w')\}$
\end{itemize}

Note that, the new state label $\sigma_n$ is only created in the possibility rule, with a reflexive relation on itself. Now consider the set $R' = \{(\sigma, \sigma')\mid \{(\sigma\ \ w\ \ \checkmark), (\sigma' \ \ w'\ \ \checkmark)\}\subseteq\Gamma_l\}$. Hence this can be considered a binary relation over the set of all distinct $\sigma$ that occurs in $\Gamma_l$. When a $\sigma'$ is created by the possibility rule, it is reflexive. Also by the relation rules, they are made symmetrically and transitively related to every other label that has been previously there. Hence $R'$ is an equivalence relation, hence making $R_i$ in the model an equivalence relation.

Now, Theorem~\ref{thm:completeness} follows from the following two claims, the proofs of which we present later.

\begin{claim} \label{claim:stm1}
If $(\sigma\ \ w\ \ \checkmark)\in\Gamma_l$ then $s_{\sigma}$ survives in $\M|_w$.
\end{claim}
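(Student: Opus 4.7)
The approach is to unpack the definition of ``$s_\sigma$ survives in $\M|_w$'' from Definition~\ref{def.upobs} and trace it back to the set $\Lambda_\sigma$ used to define $\Exp(s_\sigma)$. By that definition, $s_\sigma$ survives in $\M|_w$ iff $\LL(\Exp(s_\sigma)\regdiv w)\neq\emptyset$, i.e.\ there exists $v$ such that $wv\in\LL(\Exp(s_\sigma))$. Since $\Exp(s_\sigma)=\sum_{u\in\Lambda_\sigma}u$ is a finite sum of literal words, $\LL(\Exp(s_\sigma))=\Lambda_\sigma$, so the claim reduces to exhibiting some $u\in\Lambda_\sigma$ with $w\leq_{pre}u$.

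Starting from the hypothesis $(\sigma\ \ w\ \ \checkmark)\in\Gamma_l$, I would consider the set
$$X_{\sigma,w}=\{w'\mid (\sigma\ \ w'\ \ \checkmark)\in\Gamma_l\text{ and }w\leq_{pre}w'\},$$
which is non-empty because it contains $w$ itself. Using the finiteness of $\Gamma_l$ (see below), I would pick a $\leq_{pre}$-maximal element $w^{*}\in X_{\sigma,w}$. By maximality, no $w''$ with $w^{*}<_{pre}w''$ occurs in $\Gamma_l$ as a $\checkmark$-term at $\sigma$: any such $w''$ would also extend $w$ and hence lie in $X_{\sigma,w}$, contradicting the choice of $w^{*}$. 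Therefore $w^{*}\in\Lambda_\sigma$, and writing $w^{*}=wv$ yields a $v\in\LL(\Exp(s_\sigma)\regdiv w)$, which is the required witness.

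The main obstacle is the existence of the $\leq_{pre}$-maximal element $w^{*}$, which hinges on there being only finitely many $\checkmark$-terms for each label $\sigma$ in the saturated branch $\Gamma_l$. I would justify this by observing that new $\checkmark$-terms are introduced only by the initialization step and by the Diamond Project rule (and then propagated downward by Survival Chain), each Diamond Project being triggered by some $\ldiaarg{a}\psi$ drawn from the finite Fischer--Ladner closure $FL(\varphi)$. Under a fair saturation strategy that fires each rule at most once on any given antecedent, only finitely many Diamond Project applications occur along $\sigma$, which bounds the word lengths appearing in $\checkmark$-terms for $\sigma$ and makes $X_{\sigma,w}$ finite. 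Once this finiteness is in hand, the rest of the argument is a routine unwinding of the definitions of $\Exp$ and of update by observation.
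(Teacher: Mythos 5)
Your proof is correct, and its core is the same as the paper's: both arguments hinge on selecting a $\leq_{pre}$-maximal word $w^{*}$ (the paper calls it $w_*$) among the $\checkmark$-terms for $\sigma$ that extend $w$, observing that maximality places $w^{*}$ in $\Lambda_\sigma$ and hence in $\LL(\Exp(s_\sigma))$, and concluding $\LL(\Exp(s_\sigma)\regdiv w)\neq\emptyset$. The structural difference is that the paper wraps this argument inside an induction on $|w|$, using the Survival Chain rule to obtain $(\sigma\ \ w'\ \ \checkmark)$ for the prefix $w'$ of $w$ and invoking the inductive hypothesis for $\M|_{w'}$ — scaffolding that your direct argument shows to be unnecessary, since the existence of $w^{*}\in\Lambda_\sigma$ with $w\leq_{pre}w^{*}$ already yields survival in $\M|_w$ in one step. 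Your version is therefore a genuine simplification. On the one point that needs care — finiteness of the $\checkmark$-terms for a fixed $\sigma$, which guarantees a maximal element exists — the paper simply appeals to ``termination,'' whereas you sketch why only finitely many Diamond Project applications can fire; your sketch omits that the Possibility rule also introduces $\checkmark$-terms (for fresh labels, and at words already present), but this does not affect the finiteness conclusion, and your justification is if anything more explicit than the paper's.
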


\begin{claim} \label{claim:stm2}  For any word $w$ that occurs in $\Gamma_l$, any label $\sigma$ and any formula $\psi$, If $(\sigma\ \ w\ \ \psi)\in\Gamma_l$ and $(\sigma\ \ w\ \ \checkmark)\in\Gamma_l$ then $s_{\sigma}$ survives in $\M|_w$ and  $\M|_w,s_\sigma\vDash\psi$.
\end{claim}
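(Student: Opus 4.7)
The plan is to prove Claim~\ref{claim:stm2} by structural induction on $\psi$, with a nested induction on the structure of $\pi$ for the two observation modalities, appealing to Claim~\ref{claim:stm1} to dispose of the survival side-condition. A key preliminary I would record is that, by the construction of $\Exp(s_\sigma)$ as a sum over the $\leq_{pre}$-maximal words in $\Lambda_\sigma$, we have $w \in \prefixes(\Exp(s_\sigma))$ iff $(\sigma\ \ w\ \ \checkmark) \in \Gamma_l$; the non-trivial direction follows from the maximality clause in the definition of $\Lambda_\sigma$ together with the Survival Chain rule applied to the saturated $\Gamma_l$.

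For the propositional base cases, the Constant Valuation Up rule lifts $(\sigma\ \ w\ \ p) \in \Gamma_l$ to $(\sigma\ \ \ep\ \ p) \in \Gamma_l$, so $p \in V(s_\sigma)$ by definition of $V$; since observation updates preserve valuations, $p$ holds at $s_\sigma$ in $\M|_w$, and the Clash rule handles $\neg p$ symmetrically. Conjunction and disjunction are immediate from the AND and OR rules together with the induction hypothesis on subformulas. For $K_i\chi$, any $\sim_i$-neighbour $s_{\sigma'}$ of $s_\sigma$ surviving in $\M|_w$ witnesses $(\sigma,\sigma')_i \in \Gamma_l$ by construction of $R_i$, and by Claim~\ref{claim:stm1} also $(\sigma'\ \ w\ \ \checkmark) \in \Gamma_l$, so the Knowledge rule delivers $(\sigma'\ \ w\ \ \chi) \in \Gamma_l$ and induction finishes the case. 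The dual $\hat{K}_i\chi$ is discharged directly by the Possibility rule, which supplies a fresh labelled witness $\sigma_n$ together with all the terms needed to invoke the induction hypothesis.

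The hard part will be the observation modalities, and this is where the nested induction on $\pi$ enters. For $\ldiaarg{\pi}\chi$ I would proceed by induction on $\pi$: Diamond ND Decompose reduces a sum, Diamond Decompose reduces a concatenation, and Diamond Project, at an atomic letter $a$, yields both $(\sigma\ \ wa\ \ \checkmark)$ and the reduced formula; compounded, these produce a word $w' \in \LL(\pi)$ together with $(\sigma\ \ ww'\ \ \checkmark)$ and $(\sigma\ \ ww'\ \ \chi)$ in $\Gamma_l$, and the outer induction hypothesis, combined with the preliminary observation on $\Exp$, shows that $w'$ witnesses $\ldiaarg{\pi}\chi$ at $s_\sigma$ in $\M|_w$. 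For $[\pi]\chi$, pick an arbitrary $w' \in \LL(\pi) \cap \prefixes(\Exp(s_\sigma) \regdiv w)$; iterated Box Project yields $(\sigma\ \ ww'\ \ [\pi \regdiv w']\chi) \in \Gamma_l$, and since $w' \in \LL(\pi)$ forces $\ep \in \LL(\pi \regdiv w')$, the Empty Box rule (combined with a small normalization lemma stripping a $[\pi']$ whose language contains $\ep$) produces $(\sigma\ \ ww'\ \ \chi) \in \Gamma_l$, whence the outer induction hypothesis applies. The main obstacle is precisely this synchronization between the language-theoretic residual operation on $\pi$ and the tableau's step-by-step letter projection: I expect most of the bookkeeping to go into verifying that saturation of $\Gamma_l$ genuinely exposes every witness demanded by the semantics of $[\pi]$ and $\ldiaarg{\pi}$, for arbitrary star-free $\pi$.
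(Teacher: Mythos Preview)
Your overall plan matches the paper's: structural induction on $\psi$, using saturation of $\Gamma_l$ together with Claim~\ref{claim:stm1}. Your explicit biconditional $w \in \prefixes(\Exp(s_\sigma)) \Leftrightarrow (\sigma\ \ w\ \ \checkmark) \in \Gamma_l$ is something the paper uses tacitly but never isolates, and it is exactly what the $K_i$ case needs (you cite Claim~\ref{claim:stm1} there, but that is only the forward direction; what you are really invoking is your own preliminary). The propositional, $\hat K_i$, $K_i$, and diamond cases are handled essentially as in the paper, which also dispatches $\ldiaarg{\pi+\pi'}$, $\ldiaarg{\pi\pi'}$, $\ldiaarg{a}$ case by case---i.e.\ exactly your ``nested induction on $\pi$''.

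The genuine divergence is in the $[\pi]\chi$ case. The paper does \emph{not} try to reach a tableau term $(\sigma\ \ ww'\ \ \chi)$. Instead it applies Box Project once, obtaining $(\sigma\ \ wa\ \ [\pi\regdiv a]\chi) \in \Gamma_l$ for each $a$ with $(\sigma\ \ wa\ \ \checkmark)\in\Gamma_l$, invokes the induction hypothesis on that term to get $\M|_{wa},s_\sigma \models [\pi\regdiv a]\chi$ \emph{semantically}, and then argues directly from the truth definition that these facts jointly yield $\M|_w,s_\sigma \models [\pi]\chi$. Your route instead iterates Box Project along an entire $w' \in \LL(\pi)$ and then appeals to a ``small normalization lemma'' to strip $[\pi\regdiv w']$ whenever $\ep \in \LL(\pi\regdiv w')$. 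That lemma is not available from the tableau rules as given: Empty Box fires only when the expression is literally $\ep$, and there are no decomposition rules for $[\pi_1+\pi_2]$ or $[\pi_1\cdot\pi_2]$, so from, say, $(\sigma\ \ ww'\ \ [\ep + c]\chi)$ nothing in Figure~\ref{fig:tableaurules} produces $(\sigma\ \ ww'\ \ \chi)$. The paper's semantic recursion sidesteps this by never needing such a term; the price it pays is that the induction is no longer purely on $|\psi|$ but on a joint measure that also decreases $\pi$ (e.g.\ via $|\LL(\pi)|$ or maximum word length), which the paper states somewhat loosely. If you want to keep your unrolled version, you must either add that normalization lemma as an admissible-rule argument or switch to the paper's one-step-plus-semantic-IH pattern.
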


\begin{proof}[Proof of Claim~\ref{claim:stm1}]
        We induct on the size of $|w|$.

        \textbf{Base Case.} Let $|w| = 1$. Hence $w \in \{\epsilon\}\union \Sigma$ . Since $\Gamma\subseteq\Gamma_l$ and $(\sigma\ \ \epsilon\ \ \checkmark)$, and $s_\sigma$ is in $\M|_\epsilon = \M$.
        
        For the case $w = a$ for any $a\in\Sigma$. Hence there exists a word $w'$ that occurs in a term in $\Gamma_l$ labelled by $\sigma$ such that $w\in\prefixes(w'))$ and there is no other word bigger than $w'$ such that $w'$ is in its prefix, since the proof is on finite words and formula, the proof terminates. Hence by definition of $w'\in\LL(Exp(s_\sigma))$ which guarantees survival of $s_\sigma$ in $\M|_a$. 
        
        \textbf{Induction Hypothesis.} Assume the statement to be true for $|w| = n$.
        
        \textbf{Inductive Step.} Consider the case where $|w| = n + 1$. 
        
        By assumption, $(\sigma\ \ w\ \ \checkmark)\in\Gamma_l$. Hence by the fact that $\Gamma_l$ is saturation and by the rule "Survival Chain", there is $(\sigma\ \ w'\ \ \checkmark)\in\Gamma_l$, where $w = w'a$ for some $a\in\Sigma$. Hence by IH, the result follows that $s_\sigma$ survives in $\M|_{w'}$. 
        
        Now, by termination, there are finite many unique words occurring in $\Gamma_l$. Clearly, $w'\leq_{pre} w$. Since there are finite many words, there is a $w_*$, which is of maximum size such that $w\leq_{pre} w_*$ and $(\sigma\ \ w_*\ \ \checkmark)\in\Gamma_l$. Hence $w_*\in\Lambda_\sigma$ in the definition of Exp of the model. Therefore $w_*\in\LL(Exp(s_\sigma))$ and since $w'\leq_{pre} w\leq_{pre} w_*$, $s_\sigma$ survives in $\M|_{w'}$, hence $s_\sigma$ shall survive in $\M|_{w}$. \qedhere
\end{proof}

\begin{proof}[Proof of Claim~\ref{claim:stm2}]
        Naturally, we shall induct upon the size of $\psi$. 

        \textbf{Base Case.} Let $\psi$ is of the form $p$ or $\neg p$. By the definition of the function $V$ for the model and the previous proof, the statement stands true.
        
        \textbf{Induction Hypothesis.} Let us consider the statement is true for any $\psi$ such that $|\psi|<n'$ for some $n'$.
        
        \textbf{Inductive Step.} We prove for $|\psi| = n'$. Again, we go case by case on the syntax of $\psi$. 
        \begin{itemize}
            \item $\psi = \hat{K_i}\chi$. Since $\Gamma_l$ is saturated, by the rule of possibility, $\{(\sigma'\ \ w\ \ \chi), (\sigma, \sigma')_i, (\sigma'\ \ w\ \ \checkmark)\}\subseteq\Gamma_l$. By IH on the subformula $\chi$, the definition of the model, the proof of the previous statement, and the rule "survival chain", $s_{\sigma'}$ survives in $\M|_{w}$ and $\M|_w,s_{\sigma'}\vDash\chi$. Also by definition, $\{s_{\sigma}, s_{\sigma'}\}\in R_i$, hence proving $\M|_w, s_\sigma\vDash\hat{K_i}\chi$.
            
            \item $\psi = K_i\chi$. Since $\Gamma_l$ is saturated, and by previous statement $s_{\sigma'}$ is surviving for every $(\sigma'\ \ w\ \ \checkmark)$, by the rule of knowledge $(\sigma'\ \ w\ \ \chi)\in\Gamma_l$ for every $(\sigma, \sigma')_i$. Hence by IH on subformula, $\M|_w, s_{\sigma'}\vDash\chi$ for every $\sigma'$ such that $\{\sigma,\sigma'\}\in R_i$.
        
            \item $\psi = \ldiaarg{\pi + \pi'}\chi$. Since $\Gamma_l$ is saturated, hence by the ND Decomposition, either the term $(\sigma\ \ w\ \ \ldiaarg{\pi}\chi)\in\Gamma_l$ or $(\sigma\ \ w\ \ \ldiaarg{\pi'}\chi)\in\Gamma_l$. By IH, $\M|_w,s_\sigma\vDash\ldiaarg{\pi}\chi$ or $\M|_w,s_\sigma\vDash\ldiaarg{\pi'}\chi$ and hence $\M|_w,s_\sigma\vDash\ldiaarg{\pi + \pi'}\chi$.
        
            \item $\psi = \ldiaarg{\pi\pi'}\chi$. Since $\Gamma_l$ is saturated, hence $(\sigma\ \ w\ \ \ldiaarg{\pi}\ldiaarg{\pi'}\chi)\in \Gamma_l$. By IH, since $\ldiaarg{\pi}\ldiaarg{\pi'}\chi\in FL(\psi)$, hence $\M|_w, s_\sigma\vDash\psi$.
            
            \item $\psi = \ldiaarg{a}\chi$. Note that we don't consider a general word $w'$ in the diamond as given $w' = aw''$, a formula $\ldiaarg{w'}\chi$ is satisfiable if and only if $\ldiaarg{a}\ldiaarg{w''}\chi$ is satisfiable.
        
            \item $\psi = [\pi]\chi$. Let us consider $(\sigma\ \ wa\ \ \checkmark)\in\Gamma_l$ for some $a\in\Sigma$. Hence by the proof of the first statement, $s_\sigma\in\M|_{wa}$. Also $|\LL(\pi)| < |\LL(\pi\regdiv a)|$. Hence by induction on the size of formula $\M|_{wa},s\sigma\vDash[\pi\regdiv a]\chi$ which implies $\M|_w,s_\sigma\vDash[\pi]\chi$.  \qedhere
        \end{itemize}
        \end{proof}
    This completes the proof of Theorem~\ref{thm:completeness}
    \end{proof}

\subsection{A \texorpdfstring{$\NEXPTIME$}{nexptimeub} Upper Bound}
Now we design an algorithm based on tableau and prove existence of an algorithm that takes non-deterministically exponential steps with respect to the size of $\varphi$. Now given a $\varphi$, we now create a tree of nodes,  where each node $T_\sigma$ contains terms of the tableau of the form $(\sigma\ \ w\ \ \psi)$ and $(\sigma\ \ w\ \ \checkmark)$, where $w\in \Sigma^*$ is a word that is occuring in tableau, and $\psi$ is a formula in $FL(\varphi)$. Each node $T_\sigma$ refers to a state label $\sigma$ in tableau, a term of the $(\sigma\ \ w\ \ \psi)\in T_\sigma$ intuitively translates to in the state corresponding to $\sigma$, after projecting model on $w$, the state survives and there $\psi$ is satisfied. Similarly, $(\sigma\ \ w\ \ \checkmark)\in T_\sigma$ means state corresponding to $\sigma$ survives after projection on $w$. The tableau tree created, we call it $\cT_\tP$

% For this algorithm, we change the definition of saturation and unsaturation a bit from the earlier definition. We say $T_\sigma$ is \textbf{unsaturated} against a rule $R$ iff there is a term in $(w, \psi)\in T_\sigma$ or $(w, \checkmark)\in T_\sigma$, such that $(\sigma\ \ w\ \ \psi)$ or $(\sigma\ \ w\ \ \checkmark)$ lies in the numerator of $R$ but there is no denominator $(\sigma\ \ w\ \ \psi')$ of $R$ such that $(w, \psi')$ is in $T_\sigma$, similar for terms like $(w, \checkmark)$. We call the term $(w, \psi)$ or $(w, \checkmark)$ here to be the \textbf{reason for unsaturation}.

We saturate the rules carefully such that each node in the tree corresponds to a single state in the model. This technique is well studied in \cite{DBLP:journals/ai/HalpernM92}.

%%%%François removes that which is useless (the theorem is sufficient)
%\paragraph{Deciding SAT is in $\NEXPTIME$}
%Now we prove an upperbound of $\NEXPTIME$ on deciding the satisfiability of Star-Free Multi Agent $\POL$.
% \begin{theorem}
%     Deciding satisfiability of $\varphi$ of word fragment of $\POL$ uses nondeterministically exponential steps with respect to $|\varphi|$.
% \end{theorem}
\begin{theorem}\label{thm:NEXPTIME}
    The satisfiability of $\POLS$ is in $\NEXPTIME$.
\end{theorem}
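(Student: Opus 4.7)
The plan is to turn the tableau system of the preceding section into a nondeterministic algorithm that, on input $\varphi\in\POLS$, guesses a saturated open tableau of size exponential in $|\varphi|$ and then checks it deterministically in exponential time. By Theorems~\ref{thm:starfreemultisound} and~\ref{thm:completeness}, satisfiability of $\varphi$ is equivalent to the existence of such a tableau, so the whole task reduces to bounding its size and giving a polynomial-time (in the tableau's size) verification procedure.

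First I would bound the contents of a single node $T_\sigma$. Every formula $\psi$ appearing in a term $(\sigma\ w\ \psi)$ lies in $FL(\varphi)$, which is polynomial in $|\varphi|$; this includes residuals $\pi\regdiv w$ since Fischer--Ladner closure is closed under residuation for star-free $\pi$. For the words $w$, observe that words only grow by one letter per Diamond Project application and that, since $\varphi$ is star-free, every word in $\LL(\pi)$ for any observation expression $\pi$ occurring in $\varphi$ has length at most $|\pi|\le|\varphi|$. Hence $|w|\le|\varphi|$, and the number of distinct words is at most $(|\Act|+1)^{|\varphi|}$, which is exponential. Consequently each node contains at most exponentially many terms.

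Second I would bound the number of nodes. Following the Halpern--Moses filtration technique cited in the paragraph preceding the theorem, two labels with the same ``type''---i.e., the same set of $FL(\varphi)$-formulas they are required to satisfy under each relevant word projection---may be identified, yielding at most $2^{\text{poly}(|\varphi|)}$ distinct nodes. Together with the per-node bound this makes $\cT_\tP$ of exponential total size. The algorithm then nondeterministically guesses $\cT_\tP$, its edges (the equivalence relations $(\sigma,\sigma')_i$), and a valuation per node, and deterministically checks in time polynomial in $|\cT_\tP|$ that (i) each rule of Figure~\ref{fig:tableaurules} is saturated, (ii) $\bot$ does not appear, (iii) each $(\sigma,\sigma')_i$ relation is reflexive, symmetric, and transitive. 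Correctness follows from Theorems~\ref{thm:starfreemultisound} and~\ref{thm:completeness}.

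The main obstacle I expect is showing that the Possibility rule, which introduces fresh labels $\sigma_n$, can be made to produce only exponentially many distinct labels: one must argue that whenever a newly created successor would duplicate the type of an existing node, it can be reused instead without breaking saturation of the Knowledge rule or the reflexivity/symmetry/transitivity constraints on $(\cdot,\cdot)_i$. Once this filtration is set up carefully, the exponential bound on $|\cT_\tP|$ follows and the $\NEXPTIME$ upper bound is immediate.
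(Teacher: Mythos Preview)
Your per-node bound matches the paper's Statement~1. The gap is in bounding the \emph{number} of nodes. You propose to identify labels having the same ``type'', where a type records, for each word $w$, which $FL(\varphi)$-formulas the node must satisfy after projecting on $w$. But such a type is a subset of $\{(w,\psi) : |w|\le|\varphi|,\ \psi\in FL(\varphi)\}$, and that ambient set already has size $|\Sigma|^{O(|\varphi|)}\cdot|FL(\varphi)|$, i.e.\ exponential in $|\varphi|$. Hence the number of distinct types is $2^{2^{\text{poly}(|\varphi|)}}$, doubly exponential, and your claimed bound of $2^{\text{poly}(|\varphi|)}$ distinct nodes does not follow. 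Coarsening the type to ignore the words would bring the count down to single-exponential, but then it is no longer fine enough: two nodes may agree on their formula-set yet disagree on which words they survive, and that difference is visible to the Knowledge and Box Project rules, so they cannot be safely identified.

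The paper does not filtrate globally; instead it bounds the \emph{height} of the tableau tree $\cT_\tP$ by a polynomial (its Statement~3). Two ingredients are used. First, along any root-to-leaf path, each alternation from an $i$-successor to a $j$-successor with $j\neq i$ strips off one layer of epistemic modality, so the maximal modal depth of the carried formulas decreases and the number of alternations is bounded by $md(\varphi)$. Second, for consecutive $i$-successors the S5 structure (Possibility together with Symmetry/Transitivity) forces every $K_i/\hat K_i$ term of the parent to reappear unchanged in the child; the only fresh information is the witness $\psi$ from the triggering $\hat K_i\psi$, and the algorithm stops expanding whenever the child's content is already contained in some $i$-ancestor. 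This caps the length of same-agent blocks by a polynomial in $|FL(\varphi)|$. Together with polynomial branching (one child per $\hat K_i\psi$ subformula) and your exponential per-node bound, the whole tree is single-exponential, and the nondeterministic guess-and-check runs in $\NEXPTIME$. The Halpern--Moses reference you invoke is in fact about this depth-bounding construction for multi-agent $S5$, not about global type identification.
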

% \todo{NO, the statement of the theorem should be high-level. The current statement is sth you can find inside the proof. The theorem should be: The satisfiability of the star-free fragment of $\POL$ is in $\NEXPTIME$. }

\begin{proof}
    Given the tree $\cT_\tP$ we create in the procedure, a node $T_\sigma$ is marked satisfiable iff it does not have bot, $\{(\sigma\ \ w\ \ K_i\psi), (\sigma\ \ w\ \ \neg\psi)\}\nsubseteq T_\sigma$ and all its successors are marked satisfiable. We prove three statements:
    \begin{itemize}
        \item \textbf{Statement 1:} Each node is of at most exponential size, that is, has at most exponential many terms.
        \item \textbf{Statement 2:} Maximum children a node can have is polynomial.
        \item \textbf{Statement 3: } The height of the tree is polynomial.
    \end{itemize}

    \noindent\textbf{Proof of Statement $1$}. Since a term in a node $T_\sigma$ is of the form $(\sigma\ \ w\ \ \psi)$, where $w$ is a word over some finite alphabet $\Sigma$ and $\psi$ is a formula of $\POLS$. 

   According to the shape of the rules, a formula that can be derived is always  in $ FL(\varphi)$. Since $|FL(\varphi)|\leq O(|\varphi|)$\cite{DL}, hence there can be at most $O(|\varphi|)$ many formulas.

    Also, since a regular expression $\pi$ occuring in a modality is star-free (that is does not contain the Kleene star), hence a word $w\in\LL(\pi)$ is of length at most $|\pi|$ which is again of length at most $\varphi$. Also there are at most $|FL(\varphi)|$ many regular expressions. Hence there are at most $|\Sigma|^{O(p(|\varphi|))}$, where $p(X)$ is some polynomial on $X$, many unique words possible. Hence therefore, there can be at most exponential many terms in a single node.

    \noindent\textbf{Proof of Statement $2$}. From a node $T_\sigma$, a child is created for every unique triplet of $(\sigma\ \ w\ \ \hat{K}\psi)$ in $T_\sigma$. Number of such triplets possible is, as proved is at most polynomial with respect to $|\varphi|$.

    \noindent\textbf{Proof of Statement $3$}. For proving this, we use $md(\Gamma)$, given a set of formulas $\Gamma$, is the maximum modal depth over all formulas in $\Gamma$. Finally we define $F(T_\sigma)$ as the set of formulas occuring in the node $T_\sigma$. 

    Consider $T_\sigma$, the node $T^i_{\sigma'}$ is $i$- successor of $T_\sigma$ and $T^j_{\sigma''}$ be the $j$ successor of $T^i_{\sigma'}$ ($i\neq j$). Note that all the formulas in $F(T^j_{\sigma''})$ are from FL closure of all the $K_j$ and $\hat{K_j}$ formulas from $F(T^i_{\sigma'})$.

    Also all the formulas in $F(T^i_{\sigma'})$ are in the FL closure of the $K_i$ and $\hat{K_i}$ formulas occusring in $T_\sigma$. Hence $md(T^j_{\sigma''})\leq md(F(T^i_{\sigma'}))$. Therefore, there can be at most $O(|\varphi|^c)$ such agent alterations in one path of $\cT_P$ (not linear because there can be polynomial many words paired with each formula).

    % Now let us consider how many consecutive $i$ succesors can happen in a path. A new $i$ successor of a node is created with the knowledge formulas as well as possibility formulas in the successor as well as parent node are equal. Hence a path containing consecutive $i$ successor will have the same knowledge and possibility formulas. The only thing differing is the $\psi$ formula such that $\hat{K_i}\psi$ is a formula occuring in the parent. Therefore number of such unique $(w,\psi)$ is the maximum number of consecutive $i$-successor nodes in the path, which again some polynomial of $|\varphi|$ (at most $|\varphi|$ many formulas and polynomial many words.).

    Now let us consider how many consecutive $i$ succesors can happen in a path. Suppose a $T_{\sigma}$ has a new $i$-successor node $T_{\sigma'}$ for the term $(\sigma\ \ w\ \ \hat{K_i}\psi)$. Due to the fact that the indistinguishability relation is equivalence for each agent due to the Transitivity, Symmetry rule and the reflexivity that infers in the possibility rule, hence all the possibility and the knowledge formula terms of the form $(\sigma\ \ w'\ \ \hat{K_i}\xi)$ or $(\sigma\ \ w'\ \ K_i\xi)$ of agent $i$ are in the successor node $T_{\sigma'}$ in the form $(\sigma'\ \ w'\ \ \hat{K_i}\xi)$ or $(\sigma'\ \ w'\ \ \hat{K_i}\xi)$ respectively, along with the term $(\sigma'\ \ w \ \ \psi)$. Hence the number of such unique combination of terms will be at most polynomial to the size of $|FL(\varphi)|$. 

    Therefore, the height of $\cT_{\tP}$ is polynomial with respect to the $|\varphi|$.
    % When a child is created for a $\hat{K_i}$ formula such as $(w, \hat{K_i}\psi)$, we copy every epistemic formula from parent node along with $\psi$. Hence there exists one formula in the child node with a strictly one less $\hat{K_i}$ operator. Therefore, the modal depth with respect to possibility operator decreases strictly at every level of the $cT_P$. Hence the depth is linear with respect to $|\varphi|$.
\end{proof}

\section{Hardness of Satisfiability in \texorpdfstring{$\POLS$}{POLSatHard}}\label{section:nexptimehard}
\newcommand{\lboxarg}[1]{\left[#1\right]}
\newcommand{\tilingprop}[2]{q_{#1}^{#2}}
\newcommand{\lbox}{\square}
\newcommand{\ldia}{\Diamond}
\newcommand{\limply}{\rightarrow}
\newcommand{\bottom}{\bot}
\newcommand{\binarytree}{\mathcal T}
In this section, we give a lower bound to the Satisfiability problem of $\POLS$. We reduce the well-known $\NEXPTIME$-Complete Tiling problem to come up with a formula in the $\POLS$ fragment that only has $2$ agents.

\begin{figure}[t!]
	\begin{center}
		\begin{tikzpicture}[scale=0.5]
		
		\node[text width=2cm] at (-2, 1.5) 
		{					
			\tilewcsmall \tilewhite \tilewhite \tilegreen\tilered
			
			\tilewcsmall \tilegreen\tilewhite\tilegreen\tilered
			
			\tilewcsmall \tilegreen\tilewhite\tilewhite\tilewhite
			
			\tilewcsmall \tilewhite \tilered\tilewhite\tilered
		};
		
		\node[text width=2cm] at (-0.5, 1.5) {
			\tilewcsmall \tilegreen \tilewhite\tilegreen\tilewhite
			
			\tilewcsmall \tilewhite \tilewhite\tilewhite\tilegreen
			
			\tilewcsmall \tilewhite\tilered\tilegreen\tilewhite
			
			\tilewcsmall \tilegreen\tilered\tilered\tilewhite
			
			\tilewcsmall \tilered	\tilegreen\tilewhite\tilewhite
			
		};
		
		\draw[fill=lightgray] (0, 4) rectangle (4, 0);
		\foreach \x in {0, 1, 2, 3} {
			\draw[dashed] (\x, 0) -- (\x, 4);	
		}
		\foreach \y in {0, 1, 2, 3} {
			\draw[dashed] (0, \y) -- (4, \y);
		}

  \tile 0 {0} \tilewhite\tilered\tilegreen\tilewhite
		%	\draw[->] (0.5, 1.5) edge[in=180, out=180] (0.5, 0.5);
		%				\tile{0}{2}{\tilewhite}{\tilewhite}{\tilegreen}{\tilered}
		%		\tile{1}{2}{\tilegreen}{\tilered}{\tilegreen}{\tileyellow}
		%		\tile{2}{2}{\tilegreen}{\tilered}{\tilegreen}{\tileyellow}
		
		\end{tikzpicture}
		\hspace{20pt}
		\begin{tikzpicture}[scale=0.5, baseline=-15mm]
  % posx posy left up right down
        \tile 0 1 \tilewhite \tilewhite \tilegreen\tilewhite
		\tile 1 1 \tilegreen\tilewhite\tilegreen\tilewhite
		\tile 2 1 \tilegreen\tilewhite\tilegreen\tilewhite
        \tile 3 1 \tilegreen\tilewhite\tilered\tilered
		\tile 0 0 \tilewhite \tilewhite \tilegreen\tilered
		\tile 1 0 \tilegreen\tilewhite\tilegreen\tilered
		\tile 2 0 \tilegreen\tilewhite\tilewhite\tilewhite
        \tile 3 0 \tilewhite\tilered\tilewhite\tilered
		\tile 0 {-1} \tilewhite \tilered\tilewhite\tilered
		\tile 1 {-1} \tilewhite \tilered\tilewhite\tilered
		\tile 2 {-1} \tilewhite \tilewhite\tilewhite\tilegreen
        \tile 3 {-1} \tilewhite\tilered\tilewhite\tilered
		\tile 0 {-2} \tilewhite\tilered\tilegreen\tilewhite
		\tile 1 {-2} \tilegreen\tilered\tilered\tilewhite
		\tile 2 {-2} \tilered	\tilegreen\tilewhite\tilewhite
        \tile 3 {-2} \tilewhite\tilered\tilewhite\tilered
		\end{tikzpicture}
	\end{center}
	\caption{A set of tile types and an empty square, and a solution.\label{figure:tiling}}
\end{figure}
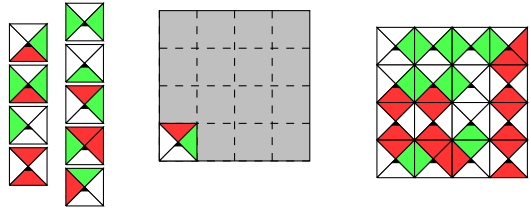

\begin{theorem}\label{theorem-nexptimehard}
$\POLS$ satisfiability problem is $\NEXPTIME$-Hard.
\end{theorem}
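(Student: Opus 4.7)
The plan is a polynomial-time reduction from the $\NEXPTIME$-complete exponential tiling problem. An instance consists of a finite tile set $T$, horizontal and vertical compatibility relations $H,V\subseteq T\times T$, a seed tile $t_0\in T$, and an integer $n$ in unary; it is a yes-instance iff there exists a tiling $\tau\colon\{0,\ldots,2^n-1\}^2\to T$ with $\tau(0,0)=t_0$, $(\tau(x,y),\tau(x+1,y))\in H$, and $(\tau(x,y),\tau(x,y+1))\in V$ whenever the right-hand cell is in-grid. I will build in polynomial time a $\POLS$-formula $\Phi$ over two agents $a,b$ that is satisfiable iff such a tiling exists.

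The guiding idea is that the star-free expression $\pi_{\mathrm{all}}:=(0+1)^{2n}$ has size $O(n)$ while $\LL(\pi_{\mathrm{all}})=\{0,1\}^{2n}$ enumerates all $2^{2n}$ grid addresses, so the modality $[\pi_{\mathrm{all}}]$ is a compact universal quantification over cells. In the intended model, a hub state $h$ carries $\Exp(h)=\pi_{\mathrm{all}}$ and is $\sim_a$-linked to every cell; each cell state $s_{x,y}$ carries a proposition $\mathrm{cell}$, a tile proposition $q_{\tau(x,y)}$, and an expectation that encodes its position with the bits of $x$ and $y$ \emph{interleaved}, so that horizontal (resp.\ vertical) neighbours share a long common prefix and both survive the appropriate partial observation. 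Agent $b$ is used to connect each cell to its in-grid horizontal and vertical neighbours.

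The formula $\Phi$, evaluated at $h$, is a polynomial-size conjunction of: \textbf{(a) Cell existence}: $[\pi_{\mathrm{all}}]\,\hat K_a\,\mathrm{cell}$, forcing, for each address $w$, an $a$-reachable cell surviving $w$; \textbf{(b) Tile uniqueness}: $[\pi_{\mathrm{all}}]\,K_a(\mathrm{cell}\to\bigvee_{t\in T}q_t)$ together with pairwise exclusion, and $[\pi_{\mathrm{all}}](\hat K_a(\mathrm{cell}\wedge q_t)\to K_a(\mathrm{cell}\to q_t))$ forcing all cells at the same address to agree on the tile; \textbf{(c) Seed}: $\ldiaarg{0^{2n}}\,K_a(\mathrm{cell}\to q_{t_0})$; \textbf{(d)--(e) Compatibility}: decomposing the binary increment $x\mapsto x+1$ by its trailing-carry length $k\in\{1,\ldots,n\}$, one has $(x,x+1)=(u\cdot 0\cdot 1^{k-1},\,u\cdot 1\cdot 0^{k-1})$ with $u\in\{0,1\}^{n-k}$, and (thanks to interleaving) the partial observation that fixes $u$ and the full $y$-coordinate keeps both $s_{x,y}$ and $s_{x+1,y}$ alive. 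A conjunction over $k$ and $t\in T$ of formulas of the form $[\pi_k^H]\,K_a\bigl((\mathrm{left}_k\wedge q_t)\to K_b(\mathrm{right}_k\to\bigvee_{(t,t')\in H}q_{t'})\bigr)$ then enforces horizontal compatibility, with $\mathrm{left}_k$ and $\mathrm{right}_k$ being polynomial-size star-free sub-formulas of the form $\ldiaarg{\pi}\top$ distinguishing the two residual patterns; (e) is symmetric with $V$ and the roles of $x$ and $y$ swapped.

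For soundness, the intended model built from $\tau$ satisfies $\Phi$ by direct verification of each conjunct. For completeness, given a satisfying $(\M,h)$, conjunct (b) makes the tile $\tau(w)$ at each address $w$ well-defined via any $a$-reachable cell surviving $w$; conjunct (c) pins $\tau(0^{2n})=t_0$; and conjuncts (d)--(e) enforce the horizontal and vertical constraints. The main obstacle is engineering (d)--(e) so that, in an arbitrary satisfying model, the $b$-reachable right-pattern cell truly corresponds to the intended horizontal neighbour (same $y$), rather than a spurious cell with the correct $x$-pattern but wrong $y$. The interleaved bit encoding lets the relevant partial observation fix both the shared $x$-prefix $u$ and the entire $y$-coordinate, collapsing the candidate right-pattern cells to the single intended neighbour; and an auxiliary existence conjunct $[\pi_k^H]\,K_a(\mathrm{left}_k\to\hat K_b\,\mathrm{right}_k)$ guarantees the needed $b$-link actually exists, closing the reduction.
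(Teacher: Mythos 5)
Your overall strategy is the same as the paper's: reduce the $\NEXPTIME$-complete $2^n\times 2^n$ tiling problem, and use a star-free expression of size $O(n)$ whose language is all $2n$-bit address words so that $[\pi]$ acts as a compact universal quantifier over exponentially many cells, with observations pruning the model down to the cells at a chosen address. Your conjuncts (a)--(c) and the tile-agreement trick in (b) are essentially the paper's mechanism for enforcing that all states carrying the same address agree on the tile.

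However, your adjacency conjuncts (d)--(e) have a genuine gap, and it is exactly at the point you flag as "the main obstacle." An observation in $\POLS$ is a \emph{prefix} of an expected word, so a cell $s_{x,y}$ whose expectation is its interleaved address word $x_1y_1x_2y_2\cdots x_ny_n$ survives an observation $w$ only if $w$ is a prefix of that word. For a carry of length $k$, the neighbours $x=u\,0\,1^{k-1}$ and $x+1=u\,1\,0^{k-1}$ first differ at bit $x_{n-k+1}$, which sits at position $2(n-k)+1$ of the interleaved word, \emph{before} the bits $y_{n-k+1},\dots,y_n$. Hence no common prefix of the two neighbours' words determines the full $y$-coordinate: the longest observation keeping both alive fixes only $u$ and the first $n-k$ bits of $y$, leaving $2^k$ surviving cells with the "right" $x$-pattern but arbitrary $y$-suffix. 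Your claim that interleaving lets the partial observation "fix both the shared $x$-prefix $u$ and the entire $y$-coordinate" is therefore false, and the $K_b(\mathrm{right}_k\to\cdots)$ constraint quantifies over spurious cells with the wrong $y$; putting $y$ entirely before $x$ fixes the horizontal case but breaks the vertical one symmetrically, and enumerating $y$-suffixes costs $2^k$ conjuncts. The paper sidesteps this entirely: it builds a binary tree whose leaves carry a \emph{pair} of positions (one in a tiling $A$, one in a tiling $B$) as $4n$ propositional variables, so the successor relation "$x_A=x_B+1$" is a plain polynomial-size Boolean formula evaluated at a single leaf, with no need to link two distinct cell-states; observations (over the disjoint alphabets $\set{A,\bar A}$ and $\set{B,\bar B}$) are used only to enforce that leaves sharing an $A$-position (resp.\ $B$-position) carry the same tile. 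You would need to adopt some such pairing device (or another way to make adjacency a local, single-state constraint) for your reduction to go through.
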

\begin{proof}
We reduce the $\NEXPTIME$-Complete tiling problem of a square whose size is $2^n$ where $n$ is encoded in unary \cite{van2019convenience} (see Figure~\ref{figure:tiling}). The instance of the tiling problem is $(\tileset, \tileseed, n)$ where $\tileset$ is a set of tile types (e.g \scalebox{0.5}{\tilewcsmall \tilegreen \tilewhite\tilegreen\tilewhite}), $\tileseed$ is a specific tile that should be at position $(0, 0)$, and $n$ is an integer given in unary. Note that the size of the square is exponential in $n$. We require the colours of the tiles to match horizontally and vertically.

The idea of the reduction works as follows. We consider two tilings A and B. We will construct a formula $tr(T, t_0, n)$ expressing that the two tilings are equal, contains $t_0$ at (0, 0), and respect the horizontal and vertical constraints. 

\newcommand{\lev}[1]{lev_{#1}}

With the help of two epistemic modalities $K_i$ and $K_j$ we can simulate a standard $K$ modal logic $\lbox$. %: $\lbox\phi$ at nesting depth $\ell$ is a macro for $\lev\ell \limply K_1K_2(\lev{\ell+1} \limply \phi)$. 
For the rest of the proof, we consider such a $\lbox$ modality and its dual $\Diamond$. We encode a binary tree whose leaves are pairs of positions (one position in tiling A and one in tiling B). Such a tree is of depth $4n$: $n$ bits to encode the $x$-coordinate in tiling A, $n$ bits to encode the $x$-coordinate in tiling B, $n$ bits to encode the $y$-coordinate in tiling A, $n$ bits to encode the $y$-coordinate in tiling B. A pair of positions is encoded with the $4n$ propositional variables: $p_0, \dots, p_{4n-1}$. The first $p_0, \dots, p_{2n-1}$ encodes the position in tiling $A$ while the later $p_{2n}, \dots, p_{4n-1}$ encodes the position in tiling $B$. At each leaf, we also use propositional variables $\tilingprop t A$ (resp. $\tilingprop t B$) to say there is tile $t$ at the corresponding position in tiling $A$ (resp. tiling $B$). The following formula enforces the existence of that binary tree $\binarytree$ by branching over the truth value of proposition $p_\ell$ at depth $\ell$:
\begin{equation}
\bigwedge_{\ell<4n}\!\!\!\lbox^\ell\!\! \left(\!\!\ldia p_\ell \land \ldia \lnot p_\ell \land \bigwedge_{i<\ell} (p_i {\limply} \lbox p_i) \land (\lnot p_i {\limply} \lbox \lnot p_i)\!\!\right)
\end{equation}
Now, by using of specific Boolean formulas over $p_0, \dots, p_{4n-1}$, it is easy to express equality, presence of $t_0$ at $(0, 0)$ and horizontal and vertical constraints:

\begin{align}
\lbox^{4n} \left(\bigvee_{t} \tilingprop t A ~~ \land ~~ \bigwedge_{t \neq t'} (\lnot\tilingprop t A \lor \lnot\tilingprop {t'} A) \right)\\
\lbox^{4n} \left(\bigvee_{t} \tilingprop t B ~~ \land ~~ \bigwedge_{t \neq t'} (\lnot\tilingprop t B \lor \lnot\tilingprop {t'} B) \right)\\
\lbox^{4n} (\text{position in tiling $A$ = 0}) \limply \tilingprop {t_0} A \\
\lbox^{4n} (\text{\begin{tabular}{l}$x$-coordinate of position in $A$ \\ = 1 + $x$-coordinate of position in $B$\end{tabular}})  \\
~~~~~~~~~~~~~~~~~\limply  \bigvee_{t ,  t' \mid \text{$t$ matches $t'$ horizontally}} (\tilingprop t A \land \tilingprop {t'} B) \\
\lbox^{4n} (\text{\begin{tabular}{l}$y$-coordinate of position in $A$ \\ = 1 + $y$-coordinate of position in $B$\end{tabular}})  \\
~~~~~~~~~~~~~~~~~\limply  \bigvee_{t ,  t' \mid \text{$t$ matches $t'$ vertically}} (\tilingprop t A \land \tilingprop {t'} B)
\end{align}
The main difficulty is to be sure that all pairs of positions with the same position for - let's say - tiling $A$ indicates the same tile for the tiling $A$ (i.e. the same variable $\tilingprop t A$ is true). To this aim, we will write a formula of the following form
$$[\pi_{\text{
any position  in A
}}] \bigvee_t \lbox^{4n} \tilingprop t A ~\land~ [\pi_{\text{any position in B}}] \bigvee_t \lbox^{4n} \tilingprop t B.$$
To be able to perform observations to select any position in tiling A (resp. B) whatever the position in tiling B (resp. A) is, we introduce the alphabet $\Sigma = \set{A, \bar A, B, \bar B}$. We write these two formulas that make a correspondence between valuations on the leaves and observations:
\begin{equation}
    \lbox^{4n}\bigwedge_{i=0..2n-1} \!\!\!\!\!\! [A + \bar A]^i \left(\begin{array}{ll}(p_i \limply \ldiaarg{A}\top \land [\bar A]\bottom) \land \\ (\lnot p_i \limply \ldiaarg{\bar A} \top \land [A]\bottom)\end{array}\right)
\end{equation}
\begin{equation}\lbox^{4n} \!\!\!\!\!\!\bigwedge_{i=2n..4n-1} \!\!\!\!\!\! [B + \bar B]^{i-2n} \left(\begin{array}{ll}(p_i \limply \ldiaarg{B}\top \land [\bar B]\bottom) \land \\ (\lnot p_i \limply \ldiaarg{\bar B} \top \land [B]\bottom)\end{array}\right)
\end{equation}
The idea is that a $2n$-length word on alphabet $\set{A, \bar A}$ corresponds to a valuation over $p_1, \dots, p_{2n-1}$, and thus a position in tiling A and only that $2n$-length word on alphabet $\set{A, \bar A}$ is observable. In the same way, a word on alphabet $\set{B, \bar B}$ corresponds to a valuation over $p_{2n}, \dots, p_{4n-1}$, thus a position in tiling B.

We also say that the inner node (non-leaf) of the binary tree is never pruned by observations (all $2n$-length words over $\set{A, \bar A, B, \bar B}$ are observable):
\begin{equation}\lbox^{< 4n} \bigwedge_{i=0..2n-1} [\Sigma]^{i} (\ldiaarg A \top \land \ldiaarg {\bar A} \top \land \ldiaarg B \top \land \ldiaarg {\bar B} \top)\end{equation}
The formula for ensuring the uniqueness of $q_t^A$ whatever the position in tiling B, and the other way around are then:
\begin{equation}
[(A+\bar A)^{2n}] \bigvee_t \lbox^{4n} \tilingprop t A \land [(B+\bar B)^{2n}] \bigvee_t \lbox^{4n} \tilingprop t B
\end{equation}

The intuition works as follows.
When evaluating $[(A+\bar A)^{2n}] \lbox^{4n} \tilingprop t A$, we consider all words $w$ in $\mathcal L((A+\bar A)^{2n})$ and we consider any pruning $\M|_w$ of the model $\modelM$ which contains the binary tree $\binarytree$. In $\M|_w$, only the leaves where the valuation on $p_0, \dots, p_{2n-1}$ that corresponds to $w$ stays. With $\bigvee_t$, we choose a tile type $t$ in $T$. The modality  $\lbox^{4n}$ then reaches all the leaves and imposes that $\tilingprop t A$ holds.

The reduction consists of computing from an instance $(\tileset, \tileseed, n)$ of the tiling problem the $\POLS$ formula $tr(\tileset, \tileseed, n)$ which is the conjunction of (1-12), which is computable in poly-time in the size of 
$(\tileset, \tileseed, n)$ (recall $n$ is in unary).
Furthermore, one can check that $(\tileset, \tileseed, n)$ is a positive instance of the tiling problem iff $tr(\tileset, \tileseed, n)$ is satisfiable.
%tedious!!!
\end{proof}

\section{Complexity results of Fragments of \texorpdfstring{$\POL^-$}{POLsatfrag}}\label{section:fragmentcomplexity}
In this section, we consider a few fragments of $\POL^-$ and we give complexity results for them. First, we consider the single agent fragment of $\POL^-$, and then we prove complexity results for the word fragment of $\POL^-$ (both single and multi-agent) using reductions to $\PAL$.
\subsection{Single agent fragment of \texorpdfstring{$\POL^-$}{POLsatfragsingleagent}}\label{subsection:starfreeonepspaceh}
\newcommand{\QBFbooleanpart}{\xi}

While we have shown (in Theorem~\ref{thm:NEXPTIME}) that the satisfiability problem of the $\POLS$ is $\NEXPTIME$-Hard, the hardness proof holds only for the case when the number of agents is at least 2. 
%For the complexity of the single  agent fragment of $\POLS$ can be much lower. \todo{replace by "However, we  $S5$ single-agent epistemic logic is $\NP$-Complete, having  "}
However, we prove that  satisfiability problem in the single Agent fragment of $\POLS$  is $\PSPACE$-Hard, although single-agent epistemic logic $S5$ is $\NP$-Complete. 

We prove it by reducing TQBF into our problem. The TQBF problem is: given a formula $\varphi$ of the form $Q_1x_1Q_2x_2\ldots Q_nx_n \QBFbooleanpart(x_1,x_2,\ldots,x_n)$ where $Q_i\in\{\forall, \exists\}$ and $\QBFbooleanpart(x_1,x_2,\ldots,x_n)$ is a Boolean formula in CNF over variables $x_1,\ldots, x_n$, decide whether the formula $\varphi$ is true.

% Using a reduction from the TQBF problem we have the following theorem:

\begin{theorem}\label{thm:singlePOL}
     The satisfiability problem for single agent fragment of $\POLS$ is $\PSPACE$-Hard.
 \end{theorem}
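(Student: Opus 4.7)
The plan is to reduce TQBF, a $\PSPACE$-complete problem, to satisfiability of single-agent $\POLS$. Given a TQBF instance
\[ \phi = Q_1 x_1 \cdots Q_n x_n\, \xi(x_1, \ldots, x_n), \]
I build in polynomial time a single-agent $\POLS$ formula $tr(\phi)$ with the property that $\phi$ is true iff $tr(\phi)$ is satisfiable. Take the action alphabet $\Sigma = \{a, \bar a\}$, so that a length-$n$ observation $w = b_1 \cdots b_n$ naturally encodes a Boolean valuation $v_w$ with $v_w(x_i) = \top \iff b_i = a$. Quantifier alternation is simulated by the modal prefix
\[ \tau := Q_1^{\mathrm{obs}}\, Q_2^{\mathrm{obs}} \cdots Q_n^{\mathrm{obs}}\, \tilde\xi, \]
where $Q_i^{\mathrm{obs}} = \ldiaarg{a + \bar a}$ when $Q_i = \exists$ and $Q_i^{\mathrm{obs}} = [a + \bar a]$ when $Q_i = \forall$, so that the evaluation tree of $\tau$ at a well-behaved root mimics the TQBF game tree.

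The central difficulty is that observations update expectations but leave propositional valuations untouched, so after all $n$ choices I still have to ``read off'' $v_w$ in order to evaluate $\xi$. For this, the intended model has a root state $s_0$ plus $2n$ auxiliary ``marker'' states $t_i^b$, one per pair $(i, b) \in \{1, \ldots, n\} \times \{a, \bar a\}$, all in the unique equivalence class of the sole agent. The marker $t_i^b$ carries a fresh proposition $p_i^b$ and has expectation $(a+\bar a)^{i-1} \cdot b \cdot (a+\bar a)^{n-i}$, so $t_i^b$ survives observation $w$ precisely when $w_i = b$. Letting $\tilde\xi$ be the result of replacing each literal $x_i$ in $\xi$ by $\hat K p_i^a$ (and each $\neg x_i$ by $\neg \hat K p_i^a$), we get that $\tilde\xi$ at $s_0$ in $\M|_w$ correctly evaluates to $\xi(v_w)$.

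To force any satisfying model to respect this reading, I conjoin $\tau$ with a structural formula $\Psi_{\mathrm{struct}}$ that asserts three things: (i) $\Exp(s_0)$ admits every length-$n$ word as a prefix, expressed as
\[ \bigwedge_{k=0}^{n-1}\, [(a+\bar a)^k]\, \bigl( \ldiaarg{a}\top \wedge \ldiaarg{\bar a}\top \bigr), \]
which, among other things, prevents the universal $[\cdot]$-quantifiers in $\tau$ from being vacuously satisfied; (ii) for each $i, b$, some accessible state carries $p_i^b$, via $\hat K p_i^b$; and (iii) every accessible $p_i^b$-state has an expectation whose length-$\le n$ prefixes are exactly the words with $b$ at position $i$, via $K(p_i^b \to \Xi_i^b)$, where $\Xi_i^b$ is a polynomial-size conjunction of modal clauses asserting ``both letters observable'' at each position before $i$ and along $b$-compatible paths after $i$, together with ``only $b$ observable'' at position $i$ (namely $[(a+\bar a)^{i-1}](\ldiaarg{b}\top \wedge [\bar b]\bot)$). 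A short induction on word length shows that $\Xi_i^b$ pins down the length-$\le n$ prefixes exactly; in particular $\Xi_i^b$ is incompatible with clause (i) at $s_0$, which forces the model to satisfy $s_0 \nvDash p_i^b$. The total formula $tr(\phi) := \Psi_{\mathrm{struct}} \wedge \tau$ has size polynomial in $|\phi|$.

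Correctness is then shown by induction on the quantifier prefix in both directions. For the forward direction I build the intended model from a winning TQBF strategy and verify each conjunct directly. For the backward direction, $\Psi_{\mathrm{struct}}$ ensures that along any evaluation branch of $\tau$, in the updated model $\M|_w$ one has $\hat K p_i^a$ at $s_0$ iff some accessible $p_i^a$-marker survived iff $w_i = a$; hence $\tilde\xi$ faithfully computes $\xi(v_w)$ and the evaluation of $\tau$ reproduces the TQBF game. The main obstacle is the concise yet tight design of $\Xi_i^b$: it must pin down exactly the right length-$n$ prefixes without leaving room for spurious behavior while remaining polynomial in $n$; one must additionally be careful with the vacuity subtlety that $[\pi]\phi$ is trivially true when $\LL(\pi) \cap \prefixes(\Exp(s))$ is empty, which is precisely why clause (i) must be imposed at $s_0$.
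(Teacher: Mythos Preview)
Your reduction is correct and follows essentially the same architecture as the paper's: a TQBF instance is mapped to a single-agent $\POLS$ formula by translating the quantifier prefix into a string of observation modalities ($\exists \mapsto \ldiaarg{a+\bar a}$, $\forall \mapsto [a+\bar a]$), and by populating the sole equivalence class with ``marker'' states whose expectations are pinned down by a polynomial-size formula of the shape of your $\Xi_i^b$ (the paper's analogue is called $L_h$). In both proofs the structural constraints ensure that, after an $n$-letter observation $w$, the survival pattern of markers faithfully encodes the valuation $v_w$, so that the innermost epistemic formula computes the truth of $\xi$ under $v_w$.

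The one genuine design difference is the granularity of the markers. The paper places one marker state per \emph{clause} $C_j$, labelled by a fresh $p_j$, and constrains it via $K(p_j \to (L_h \lor L_r \lor L_k))$; the innermost check is then simply $\bigwedge_j \hat K p_j$. You instead place two markers per \emph{variable}, labelled $p_i^a, p_i^{\bar a}$, and evaluate the Boolean matrix $\tilde\xi$ explicitly by substituting $\hat K p_i^a$ for $x_i$. Your encoding yields simpler per-marker expectation constraints (a single ``fixed position'' rather than a disjunction over literals) at the price of a slightly more elaborate innermost formula; the paper's encoding does the opposite trade-off. Both are polynomial and both correctly exploit the vacuity issue you flag (your clause (i) plays the same role as the paper's $T_i$ conjuncts, and your observation that $\Xi_i^b$ is incompatible with (i) at $s_0$ is the analogue of the paper's implicit use of $\bigvee_j p_j$ to keep the pointed state distinct from markers). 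Incidentally, your $p_i^{\bar a}$ markers are redundant---$\neg\hat K p_i^a$ already reads off $w_i=\bar a$---but this does not affect correctness.
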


The proof follows in the same lines as the proof of $\PSPACE$-Hardness of the model-checking problem of the $\POLS$ (\cite{DBLP:conf/ijcai/0001GGS22}).  
We present the complete proof of Theorem~\ref{thm:singlePOL}  in the appendix.

\subsection{Word fragment of \texorpdfstring{$\POL^-$}{POLsatfragword}}\label{subsection:wordcomplexity}
To investigate the complexity of the satisfaction problem of the word fragment of $\POL^-$, we use a translation of $\POL^-$ to $\PAL$. Before going forward, let us give a very brief overview of the syntax and semantics of $\PAL$.

%%%%%%%%%%%%PAL

\subsubsection{Public announcement logic \texorpdfstring{$(\PAL)$}{PALintro}}

To reason about announcements of agents and their effects on agent knowledge, $\PAL$ \cite{DBLP:journals/synthese/Plaza07} was proposed. The underlying model that is dealt with in $\PAL$ is epistemic, $\langle S, \sim ,V\rangle$ where $S$ is a non-empty set of states, $\sim$ assigns to each agent in $\Ag$ an equivalence relation $\sim_i \subseteq S \times S$, and $V : S \rightarrow 2^{\BP}$ is a valuation function. The language is given as follows:

\begin{defi}[$\PAL$ syntax]
    Given a countable set of {\em propositional variables} $\BP$, and a finite set of {\em agents} $\Ag$, a formula $\varphi$ in Public Announcement Logic ($\PAL$) can be defined recursively as: 
    \begin{align*}
    \varphi := \top\ \  |\ \  p\ \  |\ \ \neg\phi \ \ |\ \ \phi \wedge \phi \ \ |\ \ K_i\phi\ \ |\ \ [\phi!]\phi
\end{align*}
\noindent where $p\in\BP$, and $i\in\Ag$. 
\end{defi}
Typically, $[\phi!]\psi$ says that `if $\phi$ is true, then $\psi$ holds after having publicly announced $\phi$'.
Similarly, as in $\POL^-$ syntax, the respective dual formulas are defined as,
\begin{align*}
    \hat{K_i}\psi &= \neg K_i\neg\psi\\
    \ldiaarg{\phi!}\psi &= \neg [\phi!]\neg\psi
\end{align*}

Formula $\ldiaarg{\phi!}\psi$ says that $\phi$ is true, and $\psi$ holds after announcing $\phi$.
Before going into the truth definitions of the formulas in $\PAL$, let us first define the notion of model update. %Given an epistemic model, $M = \langle S, \sim ,V\rangle$. Before defining the truth definition, we define update of a model by announcement as:
\begin{defi}[Model Update by Announcement]
   Given an epistemic model, $\M = \langle S, \sim ,V\rangle$, $s \in S$, and a $\PAL$ formula $\phi$, the model $\M|_\phi = \langle S', \sim' ,V'\rangle$  is defined as:
    %Given a $\PAL$ model $\M = \ldiaarg{W, \{\sim_i\}_{i\in Agt}, V}$, an $s\in W$ and a $\PAL$ formula $\psi$, the model $\M|_\psi = \ldiaarg{W', \{\sim'_i\}_{i\in Agt}, V'}$ is defined as:
    \begin{itemize}
        \item $S' = \{s\in S\mid \M,s\vDash\phi\}$
        \item ${\sim'_i}={\sim_i}|_{S'\times S'},$
        \item $V'(s) = V(s)$ for any $s\in S'$.
    \end{itemize}
\end{defi}

Now we are all set to give the truth definitions of the formulas in $\PAL$ with respect to pointed epistemic models:
\begin{defi}[Truth of a $\PAL$ formula]
    Given an epistemic model $\M = \langle S, \sim, V\rangle$ and an $s\in S$, a $\PAL$ formula $\varphi$ is said to hold at $s$ if the following holds:
    \begin{itemize}
        \item $\M,s\vDash p$ iff $p\in V(s)$, where $p\in\BP$.
        \item $\M,s\vDash\neg\phi$ iff $\M,s\nvDash\phi$.
        \item $\M,s\vDash \phi\wedge\psi$ iff $\M,s\vDash\phi$ and $\M,s\vDash\psi$.
        \item $\M,s\vDash K_i\phi$ iff for all $t\in S$ with $s\sim_i t$, $\M,t\vDash\phi$.
        \item $\M,s\vDash [\psi!]\phi$ iff $\M,s\vDash\psi$ implies$\M|_{\psi},s\vDash\phi$.
    \end{itemize}
\end{defi}

%In order to study the satisfiability problem for the word fragment of $\POLS$, we compare it with Public announcement logic ($\PAL$) \cite{DBLP:journals/synthese/Plaza07}. 
\subsubsection{On complexity}

To study the satisfiability problem for the word fragment of $\POLS$, we transfer the following result from $\PAL$ to $\POLS$:

\begin{theorem}\cite{DBLP:conf/atal/Lutz06}
The satisfiability problem of $\PAL$ is $\NP$-Complete for the single-agent case and $\PSPACE$-Complete for the multi-agent case.
\end{theorem}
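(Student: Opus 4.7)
The plan is to split the argument into lower and upper bounds, with the lower bounds nearly immediate and the upper bounds requiring care to avoid a naive exponential blowup.

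For the lower bounds, observe that $\PAL$ conservatively extends classical multi-agent $S5$: any $S5$ formula is a $\PAL$ formula using no announcement operators. Thus single-agent $\PAL$ inherits $\NP$-hardness from the standard result that single-agent $S5$ satisfiability is $\NP$-complete, and multi-agent $\PAL$ inherits $\PSPACE$-hardness from the Halpern--Moses theorem that multi-agent $S5$ is $\PSPACE$-complete. No announcement machinery is needed in these reductions.

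For the upper bounds the obvious approach is to translate $\PAL$ to $S5$ via Plaza's reduction axioms and then invoke the $S5$ complexity bounds. This fails because the axiom $[\phi!]K_i\psi \leftrightarrow (\phi \rightarrow K_i(\phi \rightarrow [\phi!]\psi))$ duplicates $\phi$ at each application; iterating across nested announcements causes exponential blowup in formula size, yielding only an $\mathsf{EXPSPACE}$ bound in the multi-agent case. The challenge is therefore to reason about announcements directly. For the single-agent case I would establish a polynomial-size model property: in single-agent $S5$ any satisfiable formula has a model with one world per Boolean type realised over its subformulas, and since announcements merely delete worlds from a fixed universe rather than introduce new propositional valuations, the relevant set of types remains polynomial. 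An $\NP$ algorithm then guesses such a polynomial-size model together with a polynomial certificate of truth, and verifies by recursively computing announcement updates on the fly; correctness follows by induction on formula structure.

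For the multi-agent case I would adapt the Ladner-style alternating tableau for multi-agent $S5$ to process announcements on the fly. A configuration carries the current world label, the subformula being evaluated, and the list of currently active announcements $\phi_1, \dots, \phi_k$. A $K_i$ step forces a universal branch over $\sim_i$-successors restricted to worlds satisfying $\phi_1 \wedge \dots \wedge \phi_k$, and a $[\phi!]\psi$ step appends $\phi$ to the list and continues with $\psi$. The main obstacle will be keeping configurations polynomially bounded despite potentially many nested announcements, since the cumulative list of announcements could in principle grow large. Following Lutz, I would observe that the truth of each listed announcement at a node depends only on announcements active at its ancestors and can be tested by a recursive call on a strictly smaller subformula; this keeps each configuration polynomial-size and yields an alternating polynomial-time procedure, giving $\PSPACE$ by Chandra--Kozen--Stockmeyer.
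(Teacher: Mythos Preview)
The paper does not prove this theorem at all: it is quoted with a citation to Lutz and then used as a black box to transfer complexity bounds to the word fragment of $\POLS$ via the translation of Proposition~\ref{proposition-POLPAL}. There is therefore no proof in the paper to compare your attempt against.

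As for the substance of your sketch: the lower bounds are indeed immediate, and you correctly identify the real obstacle, namely that the Plaza reduction axioms blow up exponentially. Your upper-bound plans are plausible but remain outlines rather than proofs. For the single-agent case, the polynomial-model claim is believable because a single $S5$ equivalence class only needs one world per realised type over the subformula closure, and announcements only delete worlds; still, you would need to argue that the relevant subformula closure (including nested announcements) is polynomial and that verification of a guessed model is genuinely poly-time despite the recursive updates. For the multi-agent case, your alternating tableau with an explicit announcement stack is the right shape, and the stack length is bounded by the announcement nesting depth, hence polynomial; the point that needs to be nailed down is that checking ``the current world survives all pending announcements'' is itself a $\PAL$ evaluation and must be shown to bottom out in strictly smaller instances so the whole procedure stays alternating poly-time. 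Lutz's own argument handles this essentially by observing that the translated epistemic formula, though exponential as a tree, has only polynomially many distinct subformulas (a DAG representation), which sidesteps the recursion bookkeeping you describe; your direct-tableau route can be made to work but requires that termination/measure argument to be spelled out.
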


$\PAL$ is the extension of epistemic logic with dynamic modal constructions of the form $[\phi!]\psi$ that expresses `if $\phi$ holds, then $\psi$ holds after having announced $\phi$ publicly'.
The dynamic operator $\ldiaarg\pi$ in the word fragment of $\POLS$ consists in announcing publicly a sequence of observations. W.l.o.g. as $\pi$ is a word $a_1\dots a_k$,  $\ldiaarg\pi$ can be rewritten as  $\ldiaarg{a_1}\dots \ldiaarg{a_k}$. In other words, we suppose that the $\POLS$ dynamic operators only contain a single letter. The mechanism of $\POLS$ is close to Public announcement logic ($\PAL$). Observing $a$ consists in announcing publicly that $wa$ occurred where $w$ is the observations already seen so far.

\newcommand{\propoccured}[1]{p_{#1}}
\newcommand{\paldia}[1]{\langle #1 ! \rangle}
\newcommand\emptyword{\epsilon}
We introduce fresh atomic propositions $\propoccured {wa}$ to say that letter $a$ is compatible with the current state given that the sequence $w$ was already observed.

For all words $w \in \Sigma^*$, we then define $tr_w$ that translates a $\POLS$ formula into a $\PAL$ formula given that $w$ is the already seen observations seen so far:
\begin{align*}
    tr_w(p) = & p \\
    tr_w(\lnot \phi) = & \lnot tr_w(\phi) \\
    tr_w(\phi \land \psi) = & tr_w(\phi) \land tr_w(\psi) \\
    tr_w(K_i \phi) = & K_i tr_w(\phi) \\
    tr_w(\ldiaarg a \phi) = & \paldia{\propoccured {wa}} tr_{wa}(\phi) 
\end{align*}
We finally transform any $\POLS$ formula $\phi$ into $tr(\phi) := tr_{\emptyword}(\phi)$.

\begin{example}
Consider the $\POLS$ formula $\phi := 
[a]\bot \land \ldiaarg{a}\ldiaarg{a}\top$. $tr(\phi)$ is $[p_a!]\bot \land \paldia{\propoccured {a}} \paldia{\propoccured {aa}} \top$. Note that if $\propoccured a$ is false, the truth value of $\propoccured{aa}$ is irrelevant.
\end{example}

\begin{proposition}\label{prop:PAL}
$\phi$ is satisfiable in the word fragment of $\POLS$ iff $tr(\phi)$ is satisfiable in $\PAL$.
\label{proposition-POLPAL}
\end{proposition}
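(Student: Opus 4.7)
I would establish both directions by converting models and verifying, by induction on subformula structure, the invariant
\[ \M|_w,\, t \models \psi \quad\Longleftrightarrow\quad \M'|_{\propoccured{w^{(1)}}}|\cdots|_{\propoccured{w}},\, t \models tr_w(\psi) \]
for every state $t$ with $w \in \prefixes(\Exp(t))$, where $w^{(1)}, \ldots, w$ is the sequence of histories accumulated along the nested $\ldiaarg{a}$ operators leading to $\psi$. The driving observation is that the POL update $\M \mapsto \M|_w$ deletes precisely the states where $w$ is not a prefix of the expected observation, mirroring a PAL announcement of a proposition $\propoccured{w}$ that holds exactly at those states. Since $\prefixes(\Exp(t))$ is automatically prefix-closed, the cumulative state-set effect of the sequential announcements $\propoccured{w^{(1)}}, \ldots, \propoccured{w}$ collapses to a single restriction by $\propoccured{w}$, so the ordering issue evaporates.

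For the ($\Rightarrow$) direction, given $\M = \langle S, \sim, V, \Exp\rangle$ with $\M, s \models \phi$, I would define the PAL model $\M' = \langle S, \sim, V'\rangle$ on the same carrier by $V'(t) := V(t) \cup \{\propoccured{w} : w \in \prefixes(\Exp(t))\}$, ranging over the (finitely many) histories $w$ appearing in $tr(\phi)$. The Boolean and $K_i$ cases of the induction are immediate because the epistemic relations agree. For the diamond case, $\M|_w, t \models \ldiaarg{a}\chi$ unfolds to ``$wa \in \prefixes(\Exp(t))$ and $\M|_{wa}, t \models \chi$''; by the valuation $V'$ the first conjunct matches $\propoccured{wa}$ holding in the current restricted PAL model, and the second follows from the induction hypothesis applied at history $wa$.

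For the ($\Leftarrow$) direction, given a PAL model $\M'$ satisfying $tr(\phi)$, I would build $\M$ on the same carrier with $V(t) := V'(t) \cap \BP$ and $\Exp(t)$ chosen so that $w \in \prefixes(\Exp(t))$ iff every prefix $w' \leq_{pre} w$ in the (polynomial) set $W_\phi$ of translation histories has $\propoccured{w'}$ true at $t$. Concretely, one can take $\Exp(t) := \ep + \sum_{w \in W_t^{\max}} w$, where $W_t$ is that prefix-closed set and $W_t^{\max}$ its $\leq_{pre}$-maximal elements; this keeps $\LL(\Exp(t))$ nonempty and prefix-consistent, matching the property required by the invariant. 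The same structural induction as before then closes the argument.

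The hard part will be the mismatch between an arbitrary PAL valuation, in which $\propoccured{wa}$ may hold at a state where $\propoccured{w}$ does not, and the prefix-closure baked into $\prefixes(\Exp(t))$. The resolution is that the translation only \emph{uses} $\propoccured{wa}$ under the syntactic scope of a preceding announcement of $\propoccured{w}$, so the only semantically relevant states at that point are those where the closure condition already holds. Defining $\Exp(t)$ via the prefix-closure of the true-valued atoms, rather than taking them at face value, guarantees that the POL model built in the backward direction agrees with the PAL model on exactly the states visited during evaluation, which is all that is needed to close the induction.
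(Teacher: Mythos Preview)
Your proposal is correct and follows the same model-conversion-plus-structural-induction strategy as the paper's sketch. You are in fact more careful than the paper on two details: your invariant tracks the sequential PAL restrictions $\M'|_{\propoccured{w^{(1)}}}|\cdots|_{\propoccured{w}}$ (which is needed for the $K_i$ case, since the paper's literally stated invariant $\M',s\models tr_w(\phi)$ in the unrestricted $\M'$ quantifies over the wrong set of successors), and your $\Leftarrow$ construction of $\Exp(t)$ via prefix-closure of the true $\propoccured{w}$-atoms handles the possibility that $\propoccured{wa}$ holds where $\propoccured{w}$ does not---a wrinkle the paper's definition $\Exp(s)=\{w\mid \M',s\models \propoccured w\}$ passes over silently.
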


\begin{proof}(sketch)
\fbox{$\Rightarrow$} Suppose there is a pointed $\POLS$ model $\modelM, s_0$ such that $\modelM, s_0 \models \phi$. We define $\modelM'$ to be like $\modelM$ except that for all states $s$ in $\modelM$, for all $w \in \Sigma^*$, we say that $\propoccured w$ is true at $\modelM', s$ iff $ \Exp(s)\regdiv w \neq \emptyset$. It remains to prove that $\modelM', s_0 \models tr(\phi)$. 
%First, $\modelM', s_0 \models K^* \bigwedge_{wa \in words(\phi)} \propoccured{wa} \rightarrow \propoccured w$ by definition of the model $\modelM'$. For the rest of the proof, we
We prove by induction on $\phi$ that for all $w \in words(\phi)$, if $\Exp(s) \regdiv w \neq \emptyset$ then $\M|_w, s \models \phi$ iff $\M', s \models tr_w(\phi)$. 

We only show the interesting case of $\varphi = \ldiaarg{a}\psi$. Here the $\tr_w(\ldiaarg{a}\psi) = \paldia{\propoccured{wa}}\tr_{wa}(\psi)$. By assumption, $\M|_w,s\vDash\ldiaarg{a}\psi$. Hence $\M|_{wa},s\vDash\psi$. Therefore $Exp(s)\regdiv wa\neq\emptyset$. By definition of $\M'$, $p_{wa}$ is true in $s$. Therefore by IH $\M',s\vDash\tr_{wa}(\psi)$. And since $p_{wa}$ is true, hence $\M',s\vDash\ldiaarg{p_{wa}!}\tr_{wa}\psi$. Conversely, assuming $\M',s\vDash\paldia{\propoccured{wa}}tr_{wa}(\psi)$. Hence $\propoccured{wa}$ is true in $s$. By definition, $\propoccured{wa}$ is true iff $\Exp(s)\regdiv wa\neq \emptyset$. Also by IH, $\M|_{wa},s\vDash\psi$. Hence $\M|_w,s\vDash\ldiaarg{a}\psi$.
% \todo{prove a bit more}

\fbox{$\Leftarrow$} Suppose there is a pointed epistemic model $\modelM', s_0$ such that $\modelM', s_0 \models tr(\phi)$. We define a $\POLS$ model $\modelM$ like $\modelM'$ except that for all states $s$, 
$\Exp(s) = \set{w \in \Sigma^* \suchthat \modelM, s \models \propoccured w}$. 
It remains to prove that $\modelM, s_0 \models \phi$. For the rest of the proof, we prove by induction on $\phi$ that for all $w \in \Sigma^*$, if $\Exp(s) \regdiv w \neq \emptyset$ then $\M|_w, s \models \phi$ iff $\M', s \models tr_w(\phi)$. The proof goes similarly as earlier.
% \todo{prove a bit more}
\end{proof}

Note that the single-agent and multi-agent word fragment of $\POLS$ is a syntactic extension of propositional logic and the multi-agent epistemic logic respectively, which are $\NP$-Hard and $\PSPACE$-Hard respectively. 
From the fact that the satisfiability problem of single agent and the multi-agent fragments of $\PAL$ is in $\NP$ and $\PSPACE$ respectively, we have the following corollaries of Proposition~\ref{prop:PAL}. 

\begin{corollary}\label{coroll:singlewordNPC}
The satisfiability problem of the single-agent word fragment of $\POLS$ is $\NP$-Complete.  
\end{corollary}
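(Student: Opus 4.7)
The plan is to combine Proposition~\ref{prop:PAL} with the cited theorem that single-agent $\PAL$ satisfiability is $\NP$-complete. This gives the corollary essentially for free, once we verify that the translation from the single-agent word fragment of $\POLS$ to $\PAL$ is polynomial-time computable and produces polynomial-size formulas.

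For $\NP$-hardness, I would simply observe that the single-agent word fragment of $\POLS$ syntactically contains propositional logic: any Boolean formula is already a $\POLS$ formula (no modalities, no dynamic operators), and its $\POLS$-satisfiability coincides with its propositional satisfiability. Since $\textsc{SAT}$ is $\NP$-hard, hardness follows.

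For $\NP$-membership, the key step is the translation $tr$ from Proposition~\ref{prop:PAL}. I would argue that $tr$ runs in polynomial time. The syntax tree of $tr_w(\phi)$ mirrors that of $\phi$: each Boolean connective and each $K_i$ becomes itself, and each $\ldiaarg{a}$ becomes a $\paldia{\propoccured{wa}}$ with the word $w$ extended by $a$. The only source of blow-up is the name of the fresh proposition $\propoccured{wa}$, whose length is bounded by the modal depth of $\phi$, hence by $|\phi|$. Therefore $|tr(\phi)| = O(|\phi|^2)$ and $tr(\phi)$ is computable in polynomial time. Moreover, Proposition~\ref{prop:PAL} is stated without any restriction on the number of agents, so it applies to the single-agent case. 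Combined with the theorem of Lutz, this gives an $\NP$ decision procedure: compute $tr(\phi)$ and call the $\NP$ oracle for single-agent $\PAL$ satisfiability.

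I do not anticipate a substantial obstacle here: both the polynomial bound on $|tr(\phi)|$ and the applicability of Proposition~\ref{prop:PAL} to a single agent are routine to check. The only mild subtlety is to confirm that the fresh propositions introduced by $tr$ remain usable in the single-agent fragment of $\PAL$, which they do since $tr$ only adds dynamic operators and atomic propositions, never new agents.
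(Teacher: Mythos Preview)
Your proposal is correct and matches the paper's approach exactly: the paper derives the corollary by noting that the fragment contains propositional logic (giving $\NP$-hardness) and then invoking Proposition~\ref{prop:PAL} together with Lutz's $\NP$ upper bound for single-agent $\PAL$. You have simply spelled out the polynomial-size bound on $tr(\phi)$ in more detail than the paper does, which is fine.
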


% \begin{proof}
%     \textbf{Upper bound}. The mapping $tr$ is a poly-time reduction from the  satisfiability problem of the single-agent word fragment of $\POLS$ into the satisfiability problem of $\PAL$, which is in NP.
    
%     \noindent\textbf{Lower bound}. The single-agent Word fragment of $\POLS$ is a syntactic extension of propositional logic which is NP-hard.
% \end{proof}

\begin{corollary}\label{coroll:multiwordPSPACEC}
The satisfiability problem of the multi-agent Word fragment of $\POLS$ is $\PSPACE$-Complete.  
\end{corollary}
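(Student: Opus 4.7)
The plan is to obtain both directions by combining Proposition~\ref{prop:PAL} with the known complexity of multi-agent $\PAL$, together with the straightforward hardness inherited from multi-agent $S5$.

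For the upper bound, I would invoke Proposition~\ref{prop:PAL}: any multi-agent word-fragment formula $\phi$ is satisfiable iff $tr(\phi)$ is satisfiable in multi-agent $\PAL$. The key verification is that $tr$ runs in polynomial time and produces a polynomial-size $\PAL$ formula. Inspecting the clauses of $tr_w$, the only clause that alters structure is $tr_w(\ldiaarg a \phi)=\paldia{\propoccured{wa}}\,tr_{wa}(\phi)$, which replaces each single-letter $\POLS$ modality by a single $\PAL$ announcement modality with a fresh propositional variable $\propoccured{wa}$. Thus the formula-tree size is preserved up to a constant, only the subscripts of the fresh atoms grow (and each distinct subscript that appears in $tr(\phi)$ corresponds to a distinct position of a dynamic operator inside $\phi$, so polynomially many atoms suffice). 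Since the multi-agent $\PAL$ satisfiability problem is in $\PSPACE$ by the result of Lutz cited above, and since the reduction is polynomial-time computable, satisfiability of the multi-agent word fragment of $\POLS$ is in $\PSPACE$.

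For the lower bound, I would observe that the multi-agent word fragment of $\POLS$ is a syntactic superset of multi-agent epistemic logic $S5$: any $S5$-formula is already a well-formed $\POLS$-formula (it contains no dynamic modality), and its semantics over any epistemic expectation model $\M = \langle S,\sim,V,\Exp\rangle$ coincides with its classical epistemic semantics over the underlying epistemic skeleton $\langle S,\sim,V\rangle$, because $\Exp$ plays no role in the truth definition of purely epistemic formulas. Hence an $S5$-formula $\psi$ is $S5$-satisfiable iff it is satisfiable as a $\POLS$-formula (from $\POLS$-satisfiability one extracts the skeleton; conversely, given an $S5$-model, endow each state arbitrarily with $\Exp(s)=\ep$). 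Since multi-agent $S5$-satisfiability is $\PSPACE$-hard (Halpern and Moses), the $\PSPACE$-hardness transfers.

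I do not expect any real obstacle here: the only point that might deserve care is the polynomial-size bound on $tr(\phi)$, which amounts to observing that the set of subscripts $w$ that occur in $tr(\phi)$ is exactly the set of prefixes of the sequences of letters appearing along branches of the formula tree of $\phi$, and is therefore linear in $|\phi|$. Putting both bounds together yields the claimed $\PSPACE$-completeness.
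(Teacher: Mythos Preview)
Your proposal is correct and follows essentially the same approach as the paper: the upper bound comes from Proposition~\ref{prop:PAL} together with Lutz's $\PSPACE$ result for multi-agent $\PAL$, and the lower bound comes from the observation that the multi-agent word fragment of $\POLS$ syntactically contains multi-agent $S5$. Your additional care in verifying that $tr$ is polynomial-size and in handling the $\Exp$ function when embedding $S5$-models is appropriate and slightly more detailed than what the paper spells out.
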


% \begin{proof}
%     \textbf{Upper bound}. The mapping $tr$ is a poly-time reduction from the  satisfiability problem of the word fragment of $\POLS$ into the satisfiability problem of $\PAL$, which is in PSPACE.
    
%     \noindent\textbf{Lower bound}. The word fragment of $\POLS$ is a syntactic extension of multi-agent epistemic logic which is PSPACE-hard.
% \end{proof}

% \section{About full POL}
% \input{section_fullPOL.tex}

% \section{Tableau for Word Single Agent}
% \input{tableau-word.tex}

% \section{Tableau for Star-Free Single Agent}
% \input{tableau1-starfee}

% \section{Tableau for Word Multi-Agent}
% \input{tableau-multi-word}

\section{Related work}\label{section:relatedwork}
\newcommand{\PR}{\mathsf{PR}}
\newcommand{\NM}{\mathsf{NM}}
The complexity of Dynamic Epistemic Logic with action models and non-deterministic choice of actions is $\NEXPTIME$-Complete too \cite{DBLP:conf/tark/AucherS13} and their proof is similar to the one of Theorem~\ref{theorem-nexptimehard}.

The tableau method described for $\POL^-$ uses a general technique where terms contain the observations/announcements/actions played so far. This technique was already used for PAL \cite{DBLP:journals/logcom/BalbianiDHL10}, DEL \cite{DBLP:conf/tark/AucherS13}, and for a non-normal variant of PAL \cite{DBLP:conf/icla/MaSSV15}.

Decidability of (single-agent) epistemic propositional dynamic logic ($\EPDL$) with Perfect Recall ($\PR$) and No Miracles ($\NM$) is addressed in 
\cite{DBLP:journals/logcom/Li18}. Although $\PR$ and $\NM$ are validities in $\POL^-$, there are  differences to consider even in single agent. Firstly, in an $\EPDL$ model, a possible state can execute a program $a$ and can non-deterministically transition to a state among multiple states, whereas in $\POL^-$, if a state survives after observation $a$, it gives rise to the same state except the $\Exp$ function gets residued. Also, in $\EPDL$, after execution of a program, the state changes hence the propositional valuation in the state changes, whereas in $\POL^-$, the state \emph{survives} after a certain observation and hence the propositional valuation remains the same. 

Whereas in $\POLS$, observations update the model, there are other lines of work in which specifying what agents observe define the epistemic relations in the underlying Kripke model \cite{DBLP:conf/kr/CharrierHLMS16} (typically, two states are equivalent for some agent $i$ if agent $i$ observes the same facts in the two states).  
%\todo{compare with papers that mention "observation" and "logic"}
%\cite{DBLP:conf/atal/BarriereMMR19}

%\todo{compare with ELTL}

%\paragraph{Epistemic temporal reasoning} It is natural to describe computational behaviours with regular expressions. Finite-state controllers, i.e., automata are used to describe policies in planning \cite{DBLP:conf/aips/BonetPG09}. Interestingly, Lomuscio and Michaliszyn \cite{DBLP:conf/kr/LomuscioM16} studied an epistemic logic where formulas are evaluated on intervals and the language provides Allen's operators on intervals: in their setting, the model is an interpreted system, and a propositional variable $p$ is true in an interval $I$ if the trace of $I$ matches a given regular expression associated to $p$. In contrast, $\POL$ is not based on an already set-up model but relies on updates in a model. Bozzelli et al. \cite{DBLP:conf/sefm/BozzelliMMP17} studied the complexity of the model checking of that logic depending on the restrictions on the allowed set of Allen's operators. Their framework is similar to ours because it relies on regular expressions but the approach is orthogonal to model updates and hence, to epistemic planning.

\section{Perspectives}\label{section:perspectives}
%We so far have proven that the satisfiability problem of star-free POL is $\PSPACE$-hard and in $\NEXPTIME$.
This work paves the way to an interesting technical open question in modal logic: the connection between $\POLS$ and product modal logics. Single-agent $\POLS$ is close to the product modal logic $S5 \times K$, the logic where models are Cartesian products of an S5-model and a K-model. Indeed, the first component corresponds to the epistemic modality $\hat K_i$ while the second component corresponds to observation modalities $\ldiaarg{\pi}$. There are however two important differences. First, in $\POLS$, valuations do not change when observations are made. Second, the modality $\ldiaarg{\pi}$ is of branching at most exponential in $\pi$ while modalities in K-models do not have branching limitations. We conjecture that the two limitations can be circumvented but it requires some care when applying the finite model property of product modal logic $S5 \times K$. If this connection works, it would be a way to prove $\NEXPTIME$-Completeness of star-free single-agent $\POLS$.
%the previous paragraph (written by François) is maybe too long and too obscure.

Recall that $\POLS$ is close to $\PAL$ with propositional announcements only (see Proposition~\ref{proposition-POLPAL}).
We conjecture some connections between $\POLS$ and arbitrary $\PAL$ \cite{DBLP:conf/aiml/FrenchD08}, and more precisely with Boolean arbitrary public announcement logic \cite{DBLP:journals/lmcs/DitmarschF22}. Indeed, the non-deterministic choice $+$ enables to check the existence of some observation to make (for instance, $\ldiaarg{(a+b)^{10}}\phi$ checks for the existence of a 10-length word to observe), which is similar to checking the existence of some Boolean announcement.

The next perspective is also to tackle $\POL$ with Kleene-star in the language. This study may rely on techniques used in epistemic temporal logics. PAL with Kleene-star is undecidable \cite{DBLP:journals/sLogica/MillerM05}. Again, the undecidability proof relies on modal announcements. Since $\POL$ is close to Boolean announcements, this is a hope for $\POL$ to be decidable. The idea would be to exploit the link between dynamic epistemic logics and temporal logics \cite{DBLP:journals/igpl/DitmarschHR13}, and rely on techniques developed for tackling the satisfiability problem in epistemic temporal logics \cite{DBLP:journals/jcss/HalpernV89}. 
%Another perspectives is the connection with epistemic linear temporal logics \cite{}.

% \appendix

\newpage
%% The file kr.bst is a bibliography style file for BibTeX 0.99c
\bibliographystyle{unsrt}
\bibliography{kr-sample}

\newpage
\appendix
\section{Proof of Soundness of the Tableau Method of \texorpdfstring{$\POLS$}{POLSTableauSound}}
In this section we give the proof the Theorem \ref{thm:starfreemultisound}.

We prove that if $\varphi$ is satisfiable then there exists a subtree rooted at some child $\Gamma_c$ of the root $\Gamma$ in $\mathcal{T}$ which is open using induction on the depth of the tableau tree $\mathcal{T}$. Let the pointed epistemic model that satisfy $\varphi$ be $\M,s$.

\textbf{Base Case.} Let the base case be $|\Gamma| = 2 + |Agt|$. Since we start from $\{(\sigma\ \ \epsilon\ \ \varphi), (\sigma, \sigma)_{i\in Agt}, (\sigma\ \ \epsilon\ \ \checkmark)\}$. Hence this implies $\varphi = l$ or $\varphi = [a]\psi$, where $l$ is a literal (a positive propositional letter or a negation of it). Hence the tableau remains open since no $\bot\in\Gamma$.

\textbf{Induction Hypothesis.} Let the statement be true for any tableau tree $\mathcal{T}$ of depth at most $n$.

\textbf{Inductive Step.} Consider the tableau tree $\mathcal{T}$ rooted at $\Gamma = \{(\sigma\ \ \epsilon\ \ \varphi), (\sigma, \sigma), (\sigma\ \ \epsilon\ \ \checkmark)\}$ of depth at most $n + 1$. Now we go case by case with $\varphi$:
\begin{itemize}
    \item $\varphi = \psi\wedge\chi$: We apply AND rule, and hence $\Gamma' = \Gamma\union\{(\sigma\ \ \epsilon\ \ \psi), (\sigma\ \ \epsilon\ \ \chi)\}$, which is a child rooted at $\Gamma$. Since $\M,s\vDash\varphi$, by definition $\M,s\vDash\psi$ and $\M,s\vDash\chi$. By IH, tableau tree rooted at $\Gamma'$ is open, which suggestes, $\mathcal{T}$ is open.
    
    \item $\varphi = \psi\vee\chi$: We apply OR rule and hence we get two children, $\Gamma_1 = \Gamma\union\{(\sigma\ \ \epsilon\ \ \psi)\}$ and $\Gamma_2 = \Gamma\union\{(\sigma\ \ \epsilon\ \ \chi)\}$. Since $\M,s\vDash\psi\vee\chi$, hence $\M,s\vDash\psi$ or $\M,s\vDash\chi$. By IH, one of the sub tableau tree rooted at $\Gamma_1$ or $\Gamma_2$ will be open, hence implying $\mathcal{T}$ to be open.
    
    \item $\varphi = K_i\psi$: By applying the Knowledge rule, $\Gamma' = \Gamma\union\{(\sigma'\ \ \epsilon\ \ \psi)\mid \{(\sigma, \sigma')_i, (\sigma'\ \ \epsilon\ \ \checkmark)\}\subseteq\Gamma\}$. Since $\M,s\vDash K_i\psi$, hence for every $s'\in\M$ such that $s\sim_i s'$, $\M,s'\vDash\psi$. By IH, the tableau subtree rooted at $\Gamma'$ is open.
    
    \item $\varphi = \hat{K_i}\chi$: By applying the possibility rule, $\Gamma' = \Gamma\union\{(\sigma_n, \sigma_n)_{i\in Agt}, (\sigma_n\ \ \epsilon\ \ \checkmark), (\sigma_n\ \ \epsilon\ \ \chi)\}\union\{(\sigma, \sigma_n)_i, (\sigma_n, \sigma)_i\}\union\{(\sigma', \sigma_n)_i, (\sigma_n, \sigma')_i\mid \{(\sigma'\ \ \epsilon\ \ \checkmark), (\sigma,\sigma')_i\}\subseteq\Gamma_l\}$. Since, $\M,s\vDash\hat{K_i}\chi$, hence there is an $s_n\in\M$ such that $\M,s_n\vDash\chi$, also $s_n\sim s'$ for every $s'\in\M$ since $\sim$ is an equivalence relation. Hence by IH, the sub tableau tree rooted at $\Gamma'$ is open.

    \item $\varphi = \ldiaarg{\pi\pi'}\psi$:  Hence by the diamond decomposition rule $\Gamma' = \Gamma\union\{(\sigma\ \ \epsilon\ \ \ldiaarg{\pi}\ldiaarg{\pi'}\psi)\}$. Since $\M,s\vDash\ldiaarg{\pi\pi'}\psi$, hence $\M|_{w_*},s\vDash\psi$, for some $w_*\in\LL(\pi\pi')$ which also suggests, there is a $w\in\LL(\pi)$ and a $w'\in\LL(\pi')$ such that $w_* = ww'$. Hence   $\M|_w,s\vDash\ldiaarg{\pi'}\psi$, which implies $\M,s\vDash\ldiaarg{\pi}\ldiaarg{\pi'}\psi$. By IH, the tableau for $\Gamma'$ is open.

    \item $\varphi = \ldiaarg{a}\psi$. Hence now by Projection rule, $\Gamma' = \Gamma\union\{(\sigma\ \ a\ \ \checkmark), (\sigma\ \ a\ \ \psi)\}$. Since $\M,s\vDash\ldiaarg{a}\psi$, therefore $s\in\M|_a$ and $\M|_a,s\vDash\psi$. And hence by IH, tableau tree rooted at $\Gamma'$ is open. 

    \item $\varphi = [\pi]\psi$. Say $(\sigma\ \ a\ \ \checkmark)$ is added for some diamond formula term, say of the form $(\sigma\ \ \epsilon\ \ \ldiaarg{a}\psi')$, hence the proof has added $(\sigma\ \ a\ \ [\pi\regdiv]\psi)$. By assumption $\M, s\vDash\ldiaarg{a}\psi'$, and hence $s\in\M|_a$. Therefore $\M|_a,s\vDash[\pi\regdiv a]\psi$, hence the proof starting from terms $\{(\sigma, \sigma), (\sigma\ \ a\ \ \checkmark), (\sigma\ \ a\ \ [\pi\regdiv a]\psi)\}$ will remain open.

    For the other case $(\sigma\ \ a\ \ \checkmark)$ is not added, hence $s\notin\M|_a$, therefore the tableau remains open.
\end{itemize}
The box modality case goes similarly as in case of single agent.

\section{The \texorpdfstring{$\NEXPTIME$}{POLSTableauAlgo} Algorithm for Star-Free Multi-Agents}

Now we design an algorithm based on tableau and prove existence of an algorithm that takes non-deterministically exponential steps with respect to the size of $\varphi$. Now given a $\varphi$, we now create a tree of nodes that contains terms of the form $(w, \psi)$ and $(w, \checkmark)$, where $w\in \Sigma^*$ is a word that is occuring in tableau, and $\psi$ is a formula in $FL(\varphi)$. Each node $T_\sigma$ refers to a state label $\sigma$ in tableau, a term of the $(w,\psi)\in T_\sigma$  intuitively translates to in the state corresponding to $\sigma$, after projecting model on $w$, the state survives and there $\psi$ is satisfied, and hence refers to the term $(\sigma\ \ w\ \ \psi)$ in tableau. Similarly, $(w, \checkmark)\in T_\sigma$ means state corresponding to $\sigma$ survives after projection on $w$, and hence refers to the term $(\sigma\ \ w\ \ \checkmark)$ in the tableau. The tableau tree created, we call it $\cT_\tP$

For this algorithm, we change the definition of saturation and unsaturation a bit from the earlier definition. We say $T_\sigma$ is \textbf{unsaturated} against a rule $R$ iff there is a term in $(w, \psi)\in T_\sigma$ or $(w, \checkmark)\in T_\sigma$, such that $(\sigma\ \ w\ \ \psi)$ or $(\sigma\ \ w\ \ \checkmark)$ lies in the numerator of $R$ but there is no denominator $(\sigma\ \ w\ \ \psi')$ of $R$ such that $(w, \psi')$ is in $T_\sigma$, similar for terms like $(w, \checkmark)$. We call the term $(w, \psi)$ or $(w, \checkmark)$ here to be the \textbf{reason for unsaturation}.

We saturate the rules carefully such that each node in the tree corresponds to a single state in the model. This technique is well studied in \cite{DBLP:journals/ai/HalpernM92}.

% Instead what we do is take advantage of a model being consistent. After saturating every rule except possibility, we consider nondeterministically, either the positive subformula or a negative subformula.

% Now, given $\varphi$, we start with the set of terms $T_{\sigma_0} = \{(\epsilon, \varphi), (\epsilon, \checkmark)\}$.

\begin{breakablealgorithm}
\caption{StarFree-SAT}
\begin{algorithmic}[1]
\Procedure{StarFree-SAT}{$\varphi$}
    \State $T_{\sigma_0} \gets  \{(\epsilon, \varphi), (\epsilon, \checkmark)\}$ 
    \State $T_{\sigma_0}$ is the root of tree $\cT_P$.
    \While{there is a leaf of $\cT_P$ that satisfies one of the following conditions}
        \If{$\bot\notin T_\sigma$ and $T_\sigma$ is unsaturated against Propositional and Survival Rules then}
            \State Let $(w, \psi)$ or $(w, \checkmark)$ be the reason for unsaturation against the above rules.
            
            \If{$\psi = \psi_1\wedge\psi_2$} $T_\sigma = T_\sigma\union\{(w, \psi_1),(w, \psi_2)\}$
            
            \ElsIf{$\psi = \psi_1\vee\psi_2$} non deterministically choose $\psi_1$ or $\psi_2$ and $T_{\sigma} = T_\sigma\union\{(w, \psi_1)\}$ or $T_{\sigma} = T_\sigma\union\{(w, \psi_2)\}$ as per choice.
            
            \ElsIf{$\psi = l$ where $l$ is a literal} $T_\sigma = T_\sigma\union\{(w, [a]l), (\epsilon, l)\}$.

            \ElsIf{$\{(w, \psi), (w, \neg\psi)\}\subseteq T_\sigma$} $T_\sigma = T_\sigma\union\{\bot\}$.

            \ElsIf{$(w, \checkmark)\in T_\sigma$} $T_\sigma = T_\sigma\union\{(w',\checkmark)\mid w'\in Pre(\LL(w))\}$.

            \EndIf

        \ElsIf{$\bot\notin T_\sigma$ and $T_\sigma$ is propositionally saturated but unsaturated against Box and Diamond Rules}
            \State Let the reason for unsaturation be $(w, \psi)$.

            \State If $\psi = \ldiaarg{\pi_1\pi_2}\psi'$ then $T_\sigma = T_\sigma\union\{(w,\ldiaarg{\pi_1}\ldiaarg{\pi_2}\psi')\}$.

            \State If $\psi = \ldiaarg{\pi_1 + \pi_2}\psi'$ then nondeterministically choose a $\pi_i$, where $i\in\{1,2\}$ and make $T_\sigma = T_\sigma\union\{(w, \ldiaarg{\pi_i}\psi')\}$

            \State If $\psi = \ldiaarg{a}\psi'$ then $T_\sigma = T_\sigma\union\{(wa, \checkmark), (wa, \psi'), (w, [a]\psi')\}$.

            \State If $\psi = [\pi]\psi'$ and $(wa,\checkmark)\in T_\sigma$ then $T_\sigma = T_\sigma\union\{(wa, [\pi\regdiv a]\psi')\}$

        \ElsIf{$\bot\notin T_\sigma$ and it saturated against Propositional, Survival and Box, Diamond Rules but for all formula $\psi$ that occurs in terms $(w, \psi)\in T_\sigma$, there is a $\psi'\in FL(\psi)$ such that neither $\psi'$ nor $\neg\psi'$ occurs in $T_\sigma$}
            \State Non-deterministically choose $\psi'$ or $\neg\psi'$ and $T_\sigma = T_\sigma\union\{(w,\psi')\}$ or $T_\sigma = T_\sigma\union\{(w,\neg\psi')\}$ as per choice, where $\neg\psi'$ is in Negation normal form.

        \ElsIf{$\bot\notin T_\sigma$ and $T_\sigma$ is saturated against Propositional rules, Survival Rules and Diamond, Box rules and there is no formula $\psi$ that occurs as $(w, \psi)\in T_\sigma$, there is a $\psi'\in FL(\psi)$ such that neither $\psi'$ nor $\neg\psi'$ occurs in $T_\sigma$}
            \ForEach{$\hat{K_i}\psi$ that occurs in $T_\sigma$} 
                \State $T = \{(w, \hat{K_i}\psi')\mid (w, \hat{K_i}\psi')\in T_\sigma\}\union \{(w, K_i\psi')\mid (w, K_i\psi')\in T_\sigma\}\union\{(w, \psi)\mid (w, \hat{K_i}\psi)\in T_\sigma\}$
                \State If there is no $i$-ancestor $T_{\sigma''}$ of $T_\sigma$ such that $T\subseteq T_{\sigma''}$ then add a $i$-child $T_{\sigma'} = T$
            \EndFor
        \EndIf
    \EndWhile
    \While{$T_{\sigma_0}$ is not marked}
        \If{there is an unmarked leaf node $T_\sigma$ of $\cT_P$}
            \If{$\bot\in T_\sigma$ or $\{(w, K_i\psi), (w, \neg\psi)\}\subseteq T_\sigma$}
                \State Mark $T_{\sigma_0}$ "UNSAT"
            \Else
                \State Mark $T_{\sigma_0}$ "SAT"
            \EndIf

        \Else
            \State $T_\sigma$ is an unmarked internal node whose all children are marked.
            \If{ all children of $T_\sigma$ is marked "SAT"}
                \State Mark $T_\sigma$ "SAT"
            \Else
                \State Mark $T_\sigma$ "UNSAT"
            \EndIf
        \EndIf
    \EndWhile
    \If{$T_{\sigma_0}$ is marked "SAT"}
        \State Return SAT
    \Else 
        \State Return UNSAT
    \EndIf
    \EndProcedure
\end{algorithmic}
\end{breakablealgorithm}

\section{Satisfiability problem of single agent fragment of \texorpdfstring{$\POLS$ is $\PSPACE$}{POLSoneagentHardness}-Hard}

In this section we prove Theorem~\ref{thm:singlePOL}
% \begin{theorem}
%     The satisfiability problem for single agent fragment of $\POLS$ is $\PSPACE$-hard.
% \end{theorem}
%\begin{proof}[Proof of Theorem~\ref{thm:singlePOL}]
    \newcommand{\letterQBF}[1]{a_{x_{#1}}}
    \newcommand{\letterQBFneg}[1]{a_{\overline{x_{#1}}}}
    \newcommand{\QBFformula}{\varphi}
    \newcommand{\reduction}{\tau}
    \newcommand{\finalformula}{\reduction(\QBFformula)}
    \newcommand{\literal}{\ell}

We will prove the hardness by reduction from TQBF. Given a QBF formula $\varphi = \exists x_1 \forall x_2 \ldots Q_nx_n\QBFbooleanpart(x_1,\ldots, x_n)$, where $Q_i$ is $\exists$ if $i$ odd, and is $\forall$ is even, and $\QBFbooleanpart(x_1,\ldots, x_n)$ is a propositional formula in CNF over variables $x_1,\ldots,x_n$. Without loss of generality, we suppose we have $m$ clauses with at most 3 literals in each. The objective is to define a $\POLS$-formula $\reduction(\QBFformula)$, computable in poly-time in $|\QBFformula|$, such that $\QBFformula$ is QBF-true iff  $\reduction(\QBFformula)$ is $\POLS$-satisfiable. To save space, we denote $\neg x_i$ as $\overline{x_i}$, where $x_i$ is a variable in the TQBF. We also write  $\overline{\overline{x_i}}$ for $x_i$.

    \paragraph{Definition of $\reduction(\QBFformula)$.}
    We encode valuations over $x_1, \dots, x_n$ by words on the alphabet $\set{\letterQBF i, \letterQBFneg i \mid i=1..n}$. We say that a literal $\literal_h$ is consistent with a word $w$ if $a_{\literal_h}$ appears in $w$. For instance the word $\letterQBF 1 \letterQBFneg 2 \letterQBFneg 3$ encodes the valuation in which $x_1$ is true and both $x_2$ and $x_3$ are false. Set of valuations are represented by languages. For $1 \leq u \leq v \leq n$,
    $B^u_v := (\letterQBF{u+1} + \letterQBFneg{u+1})(\letterQBF{u+2} + \letterQBFneg{u+2})\ldots(\letterQBF{v} + \letterQBFneg{v})$, (by convention $B^u_v = \epsilon$ when $u=v$).
    Intuitively, $B^u_v$ is the language encoding the set of all possible valuations over propositions $x_{u+1}, \dots, x_v$.

    \begin{itemize}

        % \item $\tr(\bar{x_i}) = \letterQBFneg{i}$, $\tr(x_i) = \letterQBF{i}$ for any $i\in[n]$. \todo{remove?}

        \item   We first define several formulas to express constraints on expectations:
        \begin{itemize}
            \item The formula $T_i := (\ldiaarg{\letterQBF{i}}\top\wedge\ldiaarg{\letterQBFneg{i}}\top)$
             imposes that the current state survives after observing both $\letterQBF{i}$ as well as $\letterQBFneg{i}$.

            \item For each literal $\literal_h$ being $x_h$ or $\overline{x_h}$, we build a formula~$L_h$ that enforces the expectation at the current state contains all words encoding valuations over propositions $x_1, \dots, x_n$ in which $\literal_h$ is true:
            \begin{align*}
                L_h := &\bigwedge_{i = 1}^{h-1}([B^0_{i-1}]T_i\\
                    %\wedge&[B^0_{h-1}](\ldiaarg{\tr(l_h)}\top\wedge [\{\letterQBF{i}, \letterQBFneg{i}\}\setminus\tr(l_h)]\bot)\\
                    \wedge&[B^0_{h-1}](\ldiaarg{a_{\literal_h}}\top\wedge [a_{\overline{\literal_h}}]\bot)\\
                    \wedge&\bigwedge_{i=h+1}^n[B^0_{h-1}a_{\literal_h}B^{h}_{i-1}]T_i
            \end{align*}
            %the current state survives on all words $w\in\LL(B^0_n)$ such that the $h^{th}$ letter in the word is $\tr(l_h)$. We call such $w$ to be a \textbf{word consistent with $l_h$}.
            \item Finally for each clause $C_j = (\literal_h\vee \literal_r\vee \literal_k)$, we define the formula  $\tr(C_j) := K(p_j\rightarrow(L_h\vee L_r\vee L_k))\wedge \hat{K}p_j$.
            The subformula $\hat{K}p_j$ enforces the existence of a $p_j$-state. And the subformula $K(p_j\rightarrow(L_h\vee L_r\vee L_k))$ enforces that any $p_j$-state survives on all the words from $w\in B^0_n$ which are consistent with either $l_h$, or $l_r$ or $l_k$.
            
        \end{itemize}
        % $\tr_m(l_i) = (\ldiaarg{a_1}\top\wedge \ldiaarg{\bar{a_1}}\top)\wedge [a_1 + \bar{a_1}](\ldiaarg{a_2}\top\wedge\ldiaarg{\bar{a_2}}\top)\wedge [(a_1 + \bar{a_1})(a_2 + \bar{a_2})](\ldiaarg{a_3}\top\wedge\ldiaarg{\bar{a_3}}\top)\wedge\ldots\wedge [\Pi_{j = 1}^{i-1}(a_j + \bar{a_j})](\ldiaarg{\tr_c(l_i)}\top \wedge [\{a_i,\bar{a_i}\}\setminus \tr_c(l_i)]\bot)\wedge\ldots\wedge[\Pi_{j = 1}^{i-1}(a_j + \bar{a_j})\tr_c(l_i)\Pi_{j = i+1}^{n-1}(a_j + \bar{a_j})](\ldiaarg{a_n}\top\wedge\ldiaarg{\bar{a_m}}\top)$ for any $i\in[n]$, where $l_i\in\{x_i, \neg x_i\}$ \todo{explain}

        \item $S := \tr(Q_1x_1)\tr(Q_2x_2)\ldots\tr(Q_nx_n)\bigwedge_{j=1}^m\hat{K}p_j$ where $\tr(\forall x_i) = [\letterQBF{i} + \letterQBFneg{i}]$, $\tr(\exists x_i) = \ldiaarg{\letterQBF{i} + \letterQBFneg{i}}$ for any $i\in[n]$.
        Intuitively, the choice of a valuation over $x_1, \dots, x_n$ by the two players $\exists$ and $\forall$ in the QBF-prefix $Q_1x_1\dots Q_n x_n$ 
        is simulated by the choice of a word in $B^0_n$ by the two players $\ldiaarg{.}$ and $[.]$
        so that all clauses are true (i.e. all $p_j$-state survives the observation of $w$: $\bigwedge_{j=1}^m\hat{K}p_j$).
        %Note that, as per our $\varphi$, for all the odd $i$, we have $\ldiaarg{\letterQBF{i} + \letterQBFneg{i}}$ and for the even ones we have $[\letterQBF{i} + \letterQBFneg{i}]$. 
        %This formula insists that after projecting the satisfying model on $\letterQBF{1}$ or $\letterQBFneg{1}$, then projecting on both $\letterQBF{2}$ and $\letterQBFneg{2}$ and so on, at least one world world for each $p_j$ ($j\in[m]$) should survive.
    
    \end{itemize}
    
%Note that although the formula seems big, it is still of size at most polynomial to the size of $\varphi$.
The $\POLS$-formula $\finalformula$ is defined by 
\begin{align*}
            \finalformula = \bigwedge_{C_j\in\varphi}\tr(C_j)\wedge S\wedge (\bigvee_{j=1}^m p_j)
        \end{align*} 
    Note that $\finalformula$ can be computed in poly-time in $|\varphi|$.
    Let us prove that $\varphi$ is a QBF-true iff $\finalformula$ is $\POLS$-satisfiable.

\fbox{$\Rightarrow$}    First assume $\varphi$ is QBF-true. Hence there is a Quantifier tree that is certifying the truth. We create a model:
    \begin{itemize}
        \item $W = \{1,2,\ldots, m\}$
        \item $R = W\times W$ 
        \item $V(j) = \{p_j\}$
        \item $Exp(j) = \sum_{l_i\in C_j}(\Pi_{k=1}^{i-1}(a_k + \bar{a_k})\tr_c(l_i)\Pi_{k=i+1}^n(a_k + \bar{a_k}))$
    \end{itemize}

First we prove $\M_\varphi,1\vDash\tr(C_j)$ for every $j\in[m]$.

Consider for any state $j$, since $V(j) = \{p_j\}$, hence $\hat{K}p_j$ stands true from any state. Now we prove, since the formula $K(p_j\rightarrow (\tr_m(l_h)\vee \tr_m(l_i)\vee \tr_m(l_k)))$ insists, that $\M_\varphi,j\vDash(\tr_m(l_h)\vee \tr_m(l_i)\vee \tr_m(l_k))$

Given $C_j = (l_h\vee l_k\vee l_r)$, since $\varphi$ is true, there is a path (among many) is the quantifier tree where at the end of the path (leaf node), it is true that in every clause at least one literal is assigned true. Let us consider in $C_j$, in this path, $l_h$ was assigned true. 

By induction on $1\leq i< h$, it can be proved that $\M_\varphi|_{w},j\vDash T_i$ for every $w\in\LL(B^1_{i-1})$.

By the definition of the model and by the fact that for every $w\in\LL(B^1_{h-2})$, $\M_\varphi|_{w},j\vDash T_{h-1}$, it can be seen that $\M_\varphi|_{w},j\vDash([a_{\overline{l_h}}]\bot\wedge\ldiaarg{a_{l_h}}\top)$ for every $w\in\LL(B^1_{h-1})$

% $\M|_\varphi,j\vDash[(a_1 + \bar{a_1})(a_2 + \bar{a_2})\ldots(a_{h-1} + \bar{a_{h-1}})([\{\tr_c(x_h),\tr_c(\bar{x_h})\}\setminus\tr_c(l_h)]\bot\wedge\ldiaarg{\tr_c(l_h)}\top)]$

 Again, by induction on $h < i\leq n$, it can be proved that, $\M_\varphi|_w,j\vDash T_i$ for every $w\in\LL(B^0_{h-1}a_{l_h}B^h_{i-1})$.

Hence $\M_\varphi,1\vDash\tr(C_j)$.

Now we prove If $\varphi$ is true then $\M_\varphi,1\vDash\tr_c(Q_1x_1)\tr_c(Q_2x_2)\ldots\tr_c(Q_nx_n)\bigwedge_{j=1}^m \hat{K}p_j$. 

We prove for any $i\in\{0,\ldots, n-1\}$, $\M_\varphi|_{\tr(l_1)\ldots\tr(l_{n-i})},1\vDash\tr(Q_{n-i+1}x_{n-i+1})\ldots\tr(Q_nx_n)\bigwedge_{j=1}^m\hat{K}p_j$, where $(l_1,\ldots,l_{n-i})$ represent any assignment respect to the true paths in the quantified boolean tree upto level $n-i$.

\textbf{Base Case} Consider the case for $i=1$. Since by assumption $n$ is even, $\tr(Q_nx_n) = [\letterQBF{n} + \letterQBFneg{n}]\bigwedge_{j=1}^m\hat{K}p_j$. Since by assumption, $(l_1,\ldots,x_n)$ and $(l_1,\ldots,\bar{x_n})$ are a satisfying assignment for $\xi$, hence $\M_\varphi|_{\tr(l_1)\ldots\letterQBF{n}}, 1\vDash\bigwedge_{j=1}^m\hat{K}p_j$ as well as $\M_\varphi|_{\tr(l_1)\ldots\letterQBFneg{n}},1\vDash\bigwedge_{j=1}^m\hat{K}p_j$.

% \textbf{Induction Hypothesis} Consider for any $i\leq k$, $\M_\varphi|_{\tr(l_1)\ldots\tr(l_{n-i})},1\vDash\tr(Q_{n-i+1}x_{n-i+1})\ldots\tr(Q_nx_n)\bigwedge_{j=1}^m\hat{K}p_j$, where $(l_1,\ldots,l_{n-i})$ represent any assignment respect to the true paths in the quantified boolean tree upto level $n-i$.

\textbf{Inductive Step} Consider $i = k+1$. Consider the case where $i$ is even. Hence $\tr(Q_{n-i+1}x_{n-i+1}) = \ldiaarg{\letterQBF{n-i+1} + \letterQBFneg{n-i+1}}$. Therefor by assumption $(l_1,\ldots, x_{n-i+1})$ or $(l_1,\ldots, \bar{x_{n-i+1}})$ is an assignment that is making $\xi$ true. Hence by IH , $\M_\varphi|_{\tr(l_1)\ldots\tr(l_{n-i})\letterQBF{n-i+1}}, 1\vDash\tr(Q_{n-i+2}x_{n-i+2})\ldots\tr(Q_nx_n)\bigwedge_{j=1}^m\hat{K}p_j$ or $\M_\varphi|_{\tr(l_1)\ldots\tr(l_{n-i})\letterQBFneg{n-i+1}},1\vDash\tr(Q_{n-i+2}x_{n-i+2})\ldots\tr(Q_nx_n)\bigwedge_{j=1}^m\hat{K}p_j$, which implies $\M_\varphi|_{\tr(l_1)\ldots\tr(l_{n-i})}, 1\vDash\ldiaarg{\letterQBF{n-i+1} + \letterQBFneg{n-i+1}}\tr(Q_{n-i+2}x_{n-i+2})\ldots\tr(Q_nx_n)\bigwedge_{j=1}^m\hat{K}p_j$.

 % Let $\varphi$ has 2 variables, that is $n = 2$, and let the formula be $\exists x_1\forall x_2\xi(x_1,x_2)$. In this case the formula $\varphi_c$ becomes $\ldiaarg{a_1 + \bar{a_1}}[a_2 + \bar{a_2}]\bigwedge_{j=1}^m \hat{K}p_j$. Since $\varphi$ is true, hence after assigning $\neg x_1$ or $x_1$, for every assignment $\neg x_2$ and $x_2$ $\xi$ evaluates to true. Hence similarly, after projecting either on $a_1$ or $\bar{a_1}$, on projection on both $a_2$ and $\bar{a_2}$ all the worlds should survive. Now similarly proving for the case $\forall x_1\exists x_2\xi(x_1, x_2)$ and then inducting on $n$ gives us the result that $\M|_\varphi, 1\vDash\varphi_c$.

\fbox{$\Leftarrow$}
Now assume $\finalformula$ has a model $\M$ such that $\M,s\vDash\finalformula$. Now we derive the quantifier tree certifying $\varphi$ to be true.

We prove the following:
\begin{proposition}
If $t\in\M|_{\tr(l_1)\ldots\tr(l_n)}$ and $M,t\vDash p_j$ then $(l_1,\ldots,l_n)$ is  a satisfying assignment for clause $C_j$.
\end{proposition}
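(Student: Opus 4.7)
The plan is to argue by contraposition: assume the assignment $(l_1,\ldots,l_n)$ does not satisfy $C_j = (\ell_h \vee \ell_r \vee \ell_k)$ and derive that no $p_j$-state survives the observation $\tr(l_1)\cdots\tr(l_n) = a_{l_1}\cdots a_{l_n}$. The argument proceeds in three steps.

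First I would transfer the clause constraint from $s$ to $t$. Since $\M,s \vDash \finalformula$, in particular $\M,s \vDash \tr(C_j) = K(p_j \to (L_h \vee L_r \vee L_k)) \wedge \hat{K}p_j$. Because POL-updates only prune states and leave valuations unchanged, $t$ is already a state of $\M$ and $\M,t \vDash p_j$. In single-agent S5, $s$ and $t$ lie in the same equivalence class — this is automatic in the intended universal-relation witness model produced by the $\Rightarrow$-direction construction — so $\M,t \vDash L_{h'}$ for some $h' \in \{h,r,k\}$.

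Second, I would unfold the relevant conjunct of $L_{h'}$ at $t$, namely $[B^0_{h'-1}]([a_{\overline{\ell_{h'}}}]\bot \wedge \ldiaarg{a_{\ell_{h'}}}\top)$. Its meaning is that, for every word $w \in \LL(B^0_{h'-1})$ which prefixes some word in $\Exp(t)$, the update by $w\,a_{\overline{\ell_{h'}}}$ kills $t$. The prefix $a_{l_1}\cdots a_{l_{h'-1}}$ belongs to $\LL(B^0_{h'-1})$ (it picks one letter from each pair $\{a_{x_i}, a_{\overline{x_i}}\}$ for $i<h'$), and it is indeed a prefix of some word in $\Exp(t)$ because $t$ survives the full observation $a_{l_1}\cdots a_{l_n}$. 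Hence this prefix triggers the pruning clause.

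Third, because the assignment falsifies $\ell_{h'}$ in $C_j$, the $h'$-th entry $l_{h'}$ is precisely $\overline{\ell_{h'}}$, so the next observed letter $a_{l_{h'}}$ coincides with the forbidden $a_{\overline{\ell_{h'}}}$. Thus $t$ is pruned at step $h'$ and cannot reappear in $\M|_{a_{l_1}\cdots a_{l_n}}$, contradicting $t \in \M|_{\tr(l_1)\cdots\tr(l_n)}$. The main obstacle is purely notational — separating the role of $\ell_{h'}$ (literal occurring inside $C_j$) from $l_{h'}$ (the assignment's literal over variable $x_{h'}$), and checking that ``assignment falsifies $C_j$'' exactly forces $l_{h'} = \overline{\ell_{h'}}$ for every clause literal — together with the observation that iterated POL-updates along a word factor as single-letter updates, so local pruning at step $h'$ propagates to the full update.
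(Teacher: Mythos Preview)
Your approach is essentially the paper's: both argue by contraposition, pick a disjunct $L_{h'}$ holding at the $p_j$-state $t$, and use the conjunct $[B^0_{h'-1}](\ldiaarg{a_{\ell_{h'}}}\top\wedge[a_{\overline{\ell_{h'}}}]\bot)$ together with $l_{h'}=\overline{\ell_{h'}}$ to kill $t$ at step $h'$. Your write-up is in fact more careful than the paper's about the prefix side-condition in the semantics of $[B^0_{h'-1}]$.

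There is one genuine wobble in your Step~1. You justify $s\sim t$ by saying ``this is automatic in the intended universal-relation witness model produced by the $\Rightarrow$-direction construction''. But the proposition lives in the $\Leftarrow$-direction: $\M$ is an \emph{arbitrary} model with $\M,s\vDash\finalformula$, not the hand-built model from the forward direction. As stated, the proposition does not assume $s\sim t$, and without that link $K(p_j\to\cdots)$ at $s$ says nothing about $t$. Two clean fixes: (i) observe that in single-agent $S5$ you may w.l.o.g.\ replace $\M$ by the submodel generated from $s$ (a single equivalence class), after which every surviving $t$ is $\sim$-related to $s$; or (ii) note that in the only use of the proposition (the next one), $t$ is produced by $\hat K p_j$ evaluated at $s$, so $s\sim t$ holds by construction. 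The paper's own proof glosses over exactly this point, so you are not missing anything the paper supplies---you just need to replace the reference to the $\Rightarrow$-model with one of these two remarks.
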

\begin{proof}
    Without loss of generality, let us consider $C_j = (l'_h\vee l'_r\vee l'_k)$. Hence to $\sigma = (l_1,\ldots,l_n)$ to be a satisfying assignment for $C_j$, at least one of $l'_h, l'_r$ or $l'_k$ should be in $\sigma$. 

    Suppose $(l_1,\ldots,l_n)$ is not a satisfying assignment. Hence $l_h = \bar{l'_h}, l_r = \bar{l'_r}$ and $l_k = \bar{l'_k}$. Also by assumption,  $p_j$ is true in $t$. Therefore either $L'_h$ or $L'_r$ or $L'_k$ is true here.  Consider the term $L'_h$. By definition the term $[B^0_{h-1}](\ldiaarg{a_{l'_h}}\top\wedge[a_{\overline{l'_h}}]\bot)$ is ANDed and hence is true, but this cannot be true since after projecting on $\tr(l_1)\ldots\tr(l_{h-1})$, $B^0_{h-1}\regdiv \tr(l_1)\ldots\tr(l_{h-1})$ is non-empty and hence $\M|_{\tr(l_1)\ldots\tr(l_{h-1})},t\vDash(\ldiaarg{a_{l'_h}}\top\wedge[a_{\overline{l'_h}}]\bot)$. But this is a contradiction since $\tr(l_h) = \tr(\bar{l'_h}) = a_{\overline{l'_h}}$. 
\end{proof}

\begin{proposition}
    For any $1\leq i\leq n$,
    If $s\in\M|_{\tr(l_1)\ldots\tr(l_{n-i})}$ and $\M|_{\tr(l_1)\ldots\tr(l_{n-i})}, s\vDash \tr(Q_{n-i+1}x_{n-i+1})\ldots\tr(Q_nx_n)\bigwedge_{j=1}^m\hat{K}p_j$ then $Q_{n-i+1}x_{n-i+1}\ldots Q_nx_n\xi|_{(l_1,\ldots,l_{n-i})}$ is true.
\end{proposition}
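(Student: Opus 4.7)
The plan is to prove the proposition by induction on $i$ (from $1$ to $n$), reading the quantifier prefix from the innermost outward. At each step, the modality $\tr(Q_{n-i+1}x_{n-i+1})$ at the front of the formula semantically matches the corresponding QBF quantifier: $[\letterQBF{k}+\letterQBFneg{k}]$ naturally simulates $\forall x_k$ (both residuations must satisfy the inner formula), while $\ldiaarg{\letterQBF{k}+\letterQBFneg{k}}$ naturally simulates $\exists x_k$ (at least one residuation must). The bottom-of-recursion use of the previous proposition (together with $\bigwedge_{j=1}^m \hat{K}p_j$) ensures that, after all $n$ observations have been consumed, every clause of $\xi$ is satisfied by the chosen valuation.

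For the base case $i=1$, I would assume $\M|_{\tr(l_1)\ldots\tr(l_{n-1})}, s \vDash \tr(Q_n x_n)\bigwedge_{j=1}^m \hat K p_j$. Assuming $Q_n = \forall$ (the case $\exists$ is symmetric), the semantics of $[\letterQBF{n}+\letterQBFneg{n}]$ gives us that both $\letterQBF{n}$ and $\letterQBFneg{n}$ lie in the relevant prefix language, and in each resulting pruned model the state $s$ satisfies $\bigwedge_{j=1}^m \hat K p_j$. For either choice $l_n \in \{x_n,\bar{x_n}\}$ and for each $j$, this provides a state $t$ with $\M|_{\tr(l_1)\ldots\tr(l_n)}, t \vDash p_j$. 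By the preceding proposition, $(l_1,\dots,l_n)$ then satisfies every clause $C_j$, so $\xi|_{(l_1,\ldots,l_{n-1},l_n)}$ is true for both values of $l_n$, yielding $\forall x_n\, \xi|_{(l_1,\ldots,l_{n-1})}$.

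For the inductive step, assuming the claim for $i=k$, I would split on whether $Q_{n-k}$ is $\forall$ or $\exists$. In the $\forall$ case, the definition of $\vDash$ for $[\letterQBF{n-k}+\letterQBFneg{n-k}]\phi'$ tells us that for every $a \in \{\letterQBF{n-k},\letterQBFneg{n-k}\}$ in the appropriate prefix set, $s$ survives in $\M|_{\tr(l_1)\ldots\tr(l_{n-k-1})\cdot a}$ and the remaining prefix $\tr(Q_{n-k+1}x_{n-k+1})\ldots\tr(Q_nx_n)\bigwedge_j \hat K p_j$ holds there. Applying the induction hypothesis to each of the two extensions gives the truth of $Q_{n-k+1}x_{n-k+1}\ldots Q_n x_n\, \xi|_{(l_1,\ldots,l_{n-k-1},l_{n-k})}$ for both values of $l_{n-k}$, hence of $\forall x_{n-k}\ldots$. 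The $\exists$ case is analogous, extracting one witness $l_{n-k}$ from the diamond semantics and feeding it to the induction hypothesis.

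The main obstacle, as in most such reductions, is the bookkeeping: at each step one must verify that the observation $\tr(l_{n-k})$ is indeed a prefix of some word in $\Exp(s)$ so that $s$ genuinely survives in the further residuated model, and that the residuation of the ``outer'' modalities $\tr(Q_{n-k+1}x_{n-k+1})\ldots$ against the $1$-letter observation $\tr(l_{n-k})$ leaves them syntactically intact (which is immediate because their alphabet is disjoint from that for level $n-k$). Once this is checked, the induction runs smoothly and the two propositions together complete the $\Leftarrow$ direction of the reduction, establishing $\PSPACE$-hardness via $\finalformula$.
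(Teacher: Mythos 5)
Your overall strategy is the same as the paper's: induction on $i$ with the innermost quantifier as the base case, a case split on whether the leading operator is a box or a diamond matching $\forall$ versus $\exists$, and an appeal to the preceding proposition to turn surviving $p_j$-states into satisfied clauses. At that level the two arguments coincide, and your write-up is if anything more explicit than the paper's one-line inductive step.

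There is, however, one concrete gap. In the base case you assert that ``the semantics of $[a_{x_n}+a_{\overline{x_n}}]$ gives us that both $a_{x_n}$ and $a_{\overline{x_n}}$ lie in the relevant prefix language.'' It does not: in $\POLS$ the box $[\pi]\phi$ quantifies only over $\LL(\pi)\cap\prefixes(\Exp(s))$, so if one of the two letters fails to be a prefix of an expected observation at $s$ (after the prior updates), the box is vacuously satisfied on that letter and you cannot extract the corresponding branch of the quantifier tree --- precisely the branch the $\forall$ case needs. You acknowledge at the end that survival must be ``verified,'' but you treat it as routine bookkeeping and locate its justification in the wrong place. In the paper this non-vacuity is not free; it is derived from the remaining conjuncts of the translated formula: every relevant state satisfies some $p_j$, hence by $K(p_j\rightarrow(L_h\vee L_r\vee L_k))$ one of the $L$-formulas, whose $T_i$ conjuncts explicitly force survival under both $a_{x_i}$ and $a_{\overline{x_i}}$ at the appropriate levels. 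Making that step explicit (at the base case and at every inductive step) is what is needed for your induction to go through; the rest of your argument is sound and matches the paper's.
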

\begin{proof}
    We state that the statement as the Induction Hypothesis. Now we prove the Base Case for it, that is $i = 1$.

    \textbf{Base Case.} By assumption $s$ survives in the projection and $\M|_{\tr(l_1)\ldots\tr(l_{n-1})},s\vDash[\letterQBFneg{n} + \letterQBF{n}]\bigwedge_{j=1}^m\hat{K}p_j$. By proposition 1, since at least one state with $p_j$ for each $j$ is surviving, hence all the clause is still surviving in $\xi|_{(l_1,\ldots,l_{n-1})}$. Since $s$ has at least one of $p_j$ true here and because of the $K$ formula, $s$ survives on both projection on $\letterQBF{n}$ as well as $\letterQBFneg{n}$, and hence after that at least one state where $p_j$ is true surviving for each $j$. Hence $\forall x_n\xi_{l_1,\ldots,l_{n-1}}$ is true.

    \textbf{Inductive Step.} The Inductive Step is similarly proven as in base case.
\end{proof}

\end{document}